\documentclass[journal]{IEEEtran}
\usepackage[utf8]{inputenc}
\usepackage[T1]{fontenc}
\usepackage{amsmath,amssymb,amsfonts}
\usepackage{graphicx}
\usepackage{booktabs}
\usepackage{array}
\usepackage{threeparttable}
\makeatletter
\newcommand{\TPTnoteswidth}[1]{\xdef\TPT@hsize{\hsize#1\relax\noexpand\@parboxrestore}\TPT@hsize}
\makeatother
\usepackage{algorithm}
\usepackage{algpseudocode}
\AtBeginDocument{\setlength{\abovedisplayskip}{4pt plus 1pt minus 2pt}\setlength{\belowdisplayskip}{4pt plus 1pt minus 2pt}\setlength{\abovedisplayshortskip}{2pt}\setlength{\belowdisplayshortskip}{2pt}}
\newtheorem{proposition}{Proposition}

\newenvironment{proof}{\noindent\emph{Proof.} }{\hfill$\blacksquare$\par}
\title{The Operable Pareto Front: Distilling Offline Search into Run-Time Control for Multi-Objective UAV Edge-Computing Scheduling}
\author{
Qiao Liao,
Zhiyong Feng,
Bin Wu,
Guodong Fan

\thanks{Q. Liao, B. Wu are with School of Computer Science and Technology, Tianjin University, Tianjin, China. E-mail:\{liaoqiao,binw\}@tju.edu.cn\\
Z. Feng is with School of Computer Software, Tianjin University, Tianjin, China. E-mail: zyfeng@tju.edu.cn\\
G. Fan is with School of Information Science and Engineering, Shandong Agriculture and Engineering University, Shandong, China. E-mail: guodongfan@tju.edu.cn\\
\vspace{-2mm}

Zhiyong Feng is the corresponding author.}}
\begin{document}
\maketitle

% ==================== ABSTRACT ====================
\begin{abstract}
A UAV mobile edge computing (MEC) fleet trades energy against delay, and its schedules form a Pareto front; we call a scheduler \emph{operable} when the fleet can be asked for any point on that front at run time. We propose PrefDT, to the best of our knowledge the first preference-conditioned Decision Transformer for the problem of joint trajectory, association and offloading scheduling. Its idea comes from language modeling: we hand the model the desired trade-off as an input, such that a single model only needs to be trained once offline to return any desired point on the curve in one rollout. The network state is summarized by attention pooling with a per-user bypass, so the scheduler keeps working when user reports are lost. The energy target is a running budget decremented by what the fleet actually spends. As a result, when wind or load pushes consumption off the plan, the policy can track the difference and hold its budget. Because no corpus of preference-labeled flights exists, we design a distillation pipeline and build the corpus by ourselves. In simulation against 26 method variants, PrefDT produces the best trade-off curve of any learned method and holds its energy budget to within 0.6\% when propulsion cost rises by half in mid-flight.
\end{abstract}

\begin{IEEEkeywords}
UAV mobile edge computing, multi-objective scheduling, Decision Transformer, offline reinforcement learning, preference conditioning.
\end{IEEEkeywords}

\IEEEpeerreviewmaketitle

% ==================== I. INTRODUCTION ====================
\section{Introduction}
\label{sec:intro}

\IEEEPARstart{U}{nmanned} Aerial Vehicles (UAVs) carrying edge servers extend mobile edge computing (MEC) to places where ground infrastructure is absent, damaged, or overloaded. A fleet is expected to balance two costs: the time users wait and the energy the aircraft burn, and the balance it should strike is not fixed for the mission. In a disaster response, delay matters most while survivors are being located, but energy matters most once the fleet must stay aloft until relief arrives; the same fleet may be asked for two different balances in one day. Generally, the schedule is computed for one balance before takeoff, and a different balance costs another optimization or training run. The schedule is also fixed once the fleet is in the air. It is built from predicted energy consumption and link quality, and cannot notice the difference when wind raises what a maneuver costs, or reports from users fail to arrive. Deployment therefore asks for a scheduler that takes the balance as a request at run time, holds an energy budget against what the fleet actually spends, and keeps deciding when reports are lost.

Solving this problem means planning three quantities at the same time. In every time slot the scheduler decides where each aircraft flies, which users it serves, and how much of each user's task it takes on. These decisions are tightly coupled with each other and cannot be made separately~\cite{wu2018multiuav,jtoratc}. The decisions also serve two goals set against each other: finishing tasks quickly costs energy, and conserving energy costs delay. As a result, there is no single best schedule, and one must instead choose a single point on a trade-off curve called the Pareto front. The scheduling problem is NP-hard.

Three categories of schedulers dominate the current literature: model-based optimization, learned schedulers and population methods. They differ from each other mainly in when they commit to a trade-off. In particular, model-based optimization commits at solve time: based on a chosen weight, the two conflicting goals are collapsed into one, and the resulting program is solved by block-coordinate descent with successive convex approximation (SCA) over detailed channel and propulsion models~\cite{wu2018multiuav,zeng2019uavproc,jtoratc}. As a result, any weight change needs another run, and collapsing the goals into a single optimization objective puts part of the Pareto front permanently out of reach~\cite{das1997drawbacks}. As the second category, learned schedulers commit at training time: a deep offloading policy such as DROO~\cite{huang2020droo}, or a graph-attention scheduler~\cite{feng2024gat}, is trained for one weighting and one scenario. In contrast, population methods (i.e., the third category) postpone the commitment by returning many solutions in a single run. For example, PGMORL~\cite{xu2020pgmorl} evolves a population of policies, while the classical dominance-based evolutionary search NSGA-II~\cite{deb2002nsga2} evolves the solutions themselves. Unfortunately, this category of schedulers generally converges to a fixed menu of configurations rather than a continuum. To obtain a trade-off that is not on the menu, the search process must be run again.

In current literature, all three categories above are evaluated by how good a Pareto front they can produce. Such a criterion says nothing about the cost of changing to a different trade-off, and what happens when a flight does not go as predicted. Moreover, the schedule is computed from predicted consumption, and none of these schedulers tracks the energy already spent; nothing is corrected when the actual flying cost departs from the predicted one.

A conditioned scheduler addresses the first issue above (i.e., the cost of changing to a different trade-off). It takes two inputs: the state of the network, and the energy--delay trade-off the fleet is asked to strike. Three requirements are desirable for a UAV fleet: \textbf{(A)}~When the energy--delay trade-off is set differently, the fleet's behavior follows in proportion, across the whole range the physics permits, and without becoming unstable when more is asked for than can be delivered. \textbf{(B)}~Finish the mission within a given total energy budget, measured against the energy actually consumed rather than a predicted plan---even when conditions change mid-flight (e.g., a headwind or a heavier task load). \textbf{(C)}~Accept a mid-flight revision of that budget, issued as a single new total, without re-planning or retraining. A scheduler is said to be \emph{operable} if it meets all three requirements above, and Sec.~\ref{sec:dissoc} makes the term measurable.

This paper builds an operable scheduler without sacrificing the quality of the Pareto front. The idea comes from language modeling. In particular, Decision Transformer~\cite{chen2021dt} and Trajectory Transformer~\cite{janner2021tt} recast control as conditional sequence generation. A trajectory becomes a stream of tokens; a causal model learns to continue that stream. The decisive part is where the goal goes: the desired outcome is supplied as an input token, which can be set at inference time, instead of being compiled into a training objective. The idea has since been extended in several directions---online adaptation~\cite{zheng2022odt}, prompting~\cite{xu2022promptdt}, and multi-agent execution~\cite{meng2023madt}---and two lines of work carry it toward our setting: conditioning on more than one objective, and use as a resource manager in wireless networks. PEDA~\cite{zhu2023peda} and related multi-objective variants~\cite{ghanem2023modt} condition on a preference over vector-valued returns, evaluated on locomotion benchmarks. A value-based line conditions a critic on the preference rather than a sequence~\cite{yang2019envelope,basaklar2022pdmorl,kostrikov2022iql}. In wireless networking, Decision Transformers have begun to appear as resource managers~\cite{apdt}, though so far with a scalar cost signal and a single objective. If what a policy should achieve can be supplied as an input, then needing a fresh solve for every trade-off follows from where the trade-off enters the pipeline, not from the problem itself.

Directly borrowing the idea from language modeling does not match the physical network, and three gaps exist. The first is the input. The set of users present changes from slot to slot, and each user needs a decision of its own, while a sequence model wants a fixed layout of inputs. Pooling all users into one summary accepts any number of them but erases their identity, which per-user decisions cannot afford. Our encoder therefore keeps the pooled summary for what the fleet has in common and adds a bypass: each user's raw token is routed around the summary and directly into that user's own association and offloading heads. One frozen model can then serve any number of users and still decide for each of them separately. An additional benefit comes for free: the summary is defined on any subset of the users, so a lost report leaves a smaller set rather than an incomplete input, and the scheduler keeps working on a link that drops reports.

The second gap is the setting. A Decision Transformer is steered by a single scalar, the return-to-go; folding delay and energy into one number leaves no part of it in joules. We give one value per objective instead, one for delay and the other for energy. The energy entry starts at the mission's budget and is decremented, at each slot, by what the fleet actually consumed. By arithmetic, that entry is then the remaining allowance itself, and the policy reads it before every decision. The budget is thereby enforced against real consumption rather than against a predicted plan. A mid-flight revision is then a single overwrite of that entry, and, because each objective has an entry of its own, delay and energy targets can be set independently.

The third gap is the training data. This kind of model learns by imitation from recorded runs labeled with the outcome achieved, but no such records exist for our problem. We produce them ourselves: a compact control rule is searched once, and its archive of solutions spans the front; each sampled preference is matched to the archive member whose realized delay--energy share is nearest, and that member's flight is recorded under the corresponding label. The label is therefore a realized trade-off rather than a weight, so a requested setting denotes an outcome the fleet can actually produce. The front's extreme members are demonstrated more often than uniform sampling would, because imitation regresses toward the behavioral mean; the whole corpus costs one offline search, with no per-preference training.

Addressing all three gaps leads to PrefDT, the scheduler proposed in this paper. PrefDT turns the Pareto front from something a search has to produce into something a model can be asked for: one frozen model, trained entirely offline from preference-annotated logs and never updated again, returns any trade-off at the cost of a single rollout.

Our algorithmic contributions are summarized below.

(1) We propose PrefDT, an operable scheduler: one frozen model returns any point on the Pareto front at run time, so a new trade-off costs one rollout instead of another solve, search, or training run. It holds an energy budget against actual consumption and accepts a revised budget in flight.

(2) We bring the Decision Transformer of language modeling to UAV scheduling and adapt it to a physical network. The desired trade-off is supplied as an input at inference time rather than compiled into the training objective, so the scheduler takes its trade-off at run time. We add a per-user bypass to the pooled state encoder, so the scheduler decides for each user separately while the pooled summary accepts any number of users. We replace the scalar goal input with a vector, one value per objective, so every point on the front can be named and energy has a channel of its own. No preference-labeled flight data exists, so we build a distillation pipeline to produce it: a searched control rule supplies the flights, and each flight is labeled with the trade-off it actually struck.

(3) Three mechanisms carry these designs into deployment. Lost reports are tolerated because the pooled summary is defined on any subset of the active users: a lost report leaves a smaller set, and nothing is retrained. Energy is tracked because the energy entry of the goal vector starts at the mission's budget and is decremented every slot by what was actually consumed: the policy reads the remaining budget, and a revision is one overwrite of that entry. The model learns the whole front including extremes, because the distillation pipeline matches each sampled preference to the archived rule whose realized delay--energy share is nearest and oversamples the extreme members.

Across 26 method variants under one protocol, PrefDT produces the best trade-off curve of any learned method, and a new trade-off costs one 0.41-second rollout, 620--2400$\times$ less than a re-solve or a retraining. When propulsion cost rises by half in mid-flight, the same frozen model stays within 0.6\% of its energy budget.

The paper is organized as follows. Section~\ref{sec:model} formulates the system model and the scheduling problem. Section~\ref{sec:prefdt} presents PrefDT: the encoder, the conditioning channel, and the distillation pipeline. Section~\ref{sec:setup} describes the experimental setup and the operability criterion, and Section~\ref{sec:results} reports the experiments. Section~\ref{sec:conclusion} concludes the paper and states the limitations of this work. Appendices~A--D are provided in the supplementary material.

% ==================== III. SYSTEM MODEL ====================
\section{System Model and Problem Formulation}
\label{sec:model}
Physical models follow the multi-UAV MEC literature~\cite{zeng2019uavproc,jtoratc,alhourani2014lap} so that all baselines share an identical environment; Appendix~A collects the symbols.

\subsection{Network model and decision variables}
$M$ rotary-wing UAVs, indexed by $m\in\mathcal{M}$, fly at a fixed altitude $H$ above a square service area and act as edge servers for $U$ ground users during $T$ slots of length $\delta_t$; $\mathbf{q}_m(t)\in\mathbb{R}^2$ denotes the horizontal projection of UAV $m$ and $\mathbf{w}_u\in\mathbb{R}^2$ the (static) position of user $u$. In each slot an active set $\mathcal{K}(t)$ requests service, its size drawn uniformly from $\{4,\dots,8\}$ and its members uniformly at random from the user pool, and each member carries a task with probability $p_a$. The active set therefore has \emph{time-varying} cardinality. UAV kinematics follow
\begin{equation}
\mathbf{q}_m(t{+}1)=\mathbf{q}_m(t)+d_m(t)\big[\cos\varphi_m(t),\ \sin\varphi_m(t)\big]^{\!\top},
\label{eq:kin}
\end{equation}
with step length $0\le d_m(t)\le D_{\max}=V_{\max}\delta_t$, heading $\varphi_m(t)\in[0,2\pi)$, and speed $v_m(t)=d_m(t)/\delta_t$. In every slot the scheduler jointly selects
\begin{equation}
\mathbf{a}_t=\big\{d_m(t),\varphi_m(t)\big\}_{m\in\mathcal{M}}\,\cup\,\big\{x_{u,m}(t)\big\}\,\cup\,\big\{\rho_u(t)\big\}_{u\in\mathcal{K}(t)},
\label{eq:action}
\end{equation}
i.e., continuous flight controls, the binary association $x_{u,m}(t)\in\{0,1\}$ (each active user served by at most one UAV), and the continuous fraction $\rho_u(t)\in[0,1]$ of task $u$ offloaded to its serving UAV. Edge compute is shared equally among the users a UAV serves, $f_{m,u}(t)=f^{\max}_m/|\mathcal{U}_m(t)|$, where $f^{\max}_m$ is UAV $m$'s total CPU frequency and $\mathcal{U}_m(t)=\{u\in\mathcal{K}(t):x_{u,m}(t)=1\}$ is the set of users it serves in slot $t$---a deliberate simplification; $f_{m,u}$ can be promoted to a fourth decision without changing the framework.

\subsection{Air--ground channel}
The channel power gain $h_{u,m}(t)$ follows the standard probabilistic air--ground model~\cite{alhourani2014lap}: the elevation-dependent LoS probability of~\cite{alhourani2014lap} mixes LoS and NLoS states, which share a free-space path loss and differ by a state-dependent excess attenuation $\eta_X$, by the variance of their log-normal shadowing, and by their Nakagami fading parameter; the channel is quasi-static within a slot. Since this component is standard, we omit its equations and take its constants from~\cite{alhourani2014lap}, whose urban parameter set we adopt; the link operates at a 2\,GHz carrier, and the rate below uses $B{=}1$\,MHz, $P_u{=}0.1$\,W and $\sigma^2{=}-110$\,dBm. Each served user transmits on its own channel of bandwidth $B$, so the uplink rate is
\begin{equation}
R_{u,m}(t)=B\,
\log_2\!\Big(1+\frac{P_u\,h_{u,m}(t)}{\sigma^2}\Big),
\label{eq:rate}
\end{equation}
with user transmit power $P_u$ and noise power $\sigma^2$. Bandwidth contention among the users of one UAV is a deliberate simplification: the cost of adding a user to a UAV falls on the shared edge compute alone.

\subsection{Task and computation model}
Task $u$ is described by the tuple $(D_u,C_u,\tau^{\mathrm{dl}}_u)$: input size $D_u$ bits, computing density $C_u$ cycles/bit, and deadline $\tau^{\mathrm{dl}}_u$; our instantiation holds $C_u$ at a single value (Appendix~A). The split $\rho_u$ executes the two parts in parallel, the local part on the user's own CPU at frequency $f^{\mathrm{loc}}_u$,
\begin{align}
T^{\mathrm{loc}}_u(t)&=\frac{(1-\rho_u(t))\,D_uC_u}{f^{\mathrm{loc}}_u},
\label{eq:tloc}\\
T^{\mathrm{off}}_u(t)&=\frac{\rho_u(t)\,D_u}{R_{u,m}(t)}+\frac{\rho_u(t)\,D_uC_u}{f_{m,u}(t)},
\label{eq:toff}
\end{align}
with task completion $T_u(t)=\max\{T^{\mathrm{loc}}_u(t),T^{\mathrm{off}}_u(t)\}$, where \eqref{eq:toff} sums uplink transmission and edge execution (result feedback is negligible, as standard). Tasks settle within their arrival slot (no cross-slot queueing in this model); a task missing its deadline counts as a violation. The per-slot delay increment $\Delta T(t)$ sums completion times over active users plus a penalty $\chi$ per deadline violation, with $\chi$ sized so that penalties do not dominate the controllable part of the objective.

\subsection{Energy model}
\label{sec:energy}
UAV propulsion uses the full rotary-wing model of~\cite{zeng2019uavproc},
\begin{multline}
P^{\mathrm{fly}}(v)=\underbrace{P_0\Big(1+\frac{3v^2}{U_{\mathrm{tip}}^2}\Big)}_{\text{blade profile}}
+\underbrace{P_i\Big(\sqrt{1+\frac{v^4}{4v_0^4}}-\frac{v^2}{2v_0^2}\Big)^{\!1/2}}_{\text{induced}}\\
+\underbrace{\tfrac12\,d_0\,\rho_a\,s_r A\,v^{3}}_{\text{parasite}},
\label{eq:prop}
\end{multline}
with the standard rotor and airframe constants of~\cite{zeng2019uavproc} ($P_0{=}79.9$\,W blade profile, $P_i{=}88.6$\,W induced, $U_{\mathrm{tip}}{=}120$\,m/s, $v_0{=}4.03$\,m/s, and parasite coefficient $\tfrac12 d_0\rho_a s_rA{=}0.02$). Two properties matter. The parasite $v^3$ term: truncating \eqref{eq:prop} after two terms makes hovering and cruising nearly isoenergetic, destroying the energy--delay tension the problem is about. And the full curve's \emph{interior minimum} at $v^*{=}8.4$\,m/s (hovering costs 25\% more; 30\,m/s costs $4.8\times$ more): the physically optimal way to save energy is to fly \emph{at the valley} and trade through offloading---and which methods find, hold, or abandon this point explains most of the performance ordering in our study (Sec.~\ref{sec:corpusadv}). Edge computation obeys the standard cubic power law, at $E^{\mathrm{comp}}_{m,u}(t)=\kappa_c f_{m,u}(t)^{2}\rho_u(t)D_uC_u$ per offloaded task, with $\kappa_c$ the effective switched capacitance of the edge processor, and the per-slot system energy increment $\Delta E(t)$ sums flight power over the slot and edge-computation energy over each UAV's served users (UAV platform energy; user-side consumption is not an objective here).

\subsection{Constrained multi-objective scheduling problem}
\label{sec:program}
Collecting the models above, the offline design problem is
\begin{align}
(\mathrm{P1})\ \ \min_{\{\mathbf{a}_t\}_{t=1}^{T}}\ \ &\Big(T_{\mathrm{tot}},\,E_{\mathrm{tot}}\Big)
=\Big(\sum_{t=1}^{T}\Delta T(t),\ \sum_{t=1}^{T}\Delta E(t)\Big)
\label{eq:p1}\\
\text{s.t.}\ \
&\eqref{eq:kin},\quad 0\le d_m(t)\le D_{\max},\tag{C1}\\
&\lVert\mathbf{q}_m(t)-\mathbf{q}_{m'}(t)\rVert\ge D_{\min},\ \ \forall m\neq m',\tag{C2}\\
&\textstyle\sum_{m\in\mathcal{M}}x_{u,m}(t)\le 1,\ \ x_{u,m}(t)\in\{0,1\},\tag{C3}\\
&0\le\rho_u(t)\le\textstyle\sum_{m}x_{u,m}(t),\tag{C4}\\
&\textstyle\sum_{u}x_{u,m}(t)\,f_{m,u}(t)\le f^{\max}_m,\tag{C5}\\
&\textstyle\sum_{t=1}^{T}E_m(t)\le E^{\max}_m,\tag{C6}
\end{align}
where C1--C2 bound motion and enforce collision avoidance, C3--C4 couple offloading to association, C5 caps edge compute, and C6 is a \emph{long-term energy budget}, $E_m(t)$ being UAV $m$'s energy spent in slot $t$ and $E^{\max}_m$ its onboard capacity. Deadlines enter softly, through the penalty $\chi$. P1 is a non-convex mixed-integer nonlinear program whose single-objective scalarizations are already NP-hard~\cite{jtoratc}: continuous flight and offloading variables couple with binary association through the non-convex rate \eqref{eq:rate} and the non-monotone propulsion curve \eqref{eq:prop}.

C6 is where capability (B)'s budget contract lives in the mathematics, and it is the only constraint of P1 that spans the horizon: C1--C5 bind slot by slot, while C6 couples \emph{all} slots through the running sum $\sum_{t'\le t}E_m(t')$. No memoryless per-slot rule can enforce it, so honoring a budget requires the policy to carry its own cumulative consumption. The standard remedy has a cost of its own: a Lagrangian treatment bakes one budget into training and must retrain for the next.

Because the two objectives conflict with each other, P1 has no single minimizer but a \emph{Pareto set}. From here on we negate the objectives and work with the \emph{return} $\mathbf{G}=-(T_{\mathrm{tot}},E_{\mathrm{tot}})$, so that larger is better throughout the paper. A return $\mathbf{G}$ \emph{dominates} $\mathbf{G}'$ ($\mathbf{G}\succ\mathbf{G}'$) iff $G_j\ge G'_j$ for all $j$ with at least one strict inequality; the Pareto front is
\begin{equation}
\mathcal{P}=\big\{\mathbf{G}\in\mathcal{G}:\nexists\,\mathbf{G}'\in\mathcal{G},\ \mathbf{G}'\succ\mathbf{G}\big\},
\label{eq:front}
\end{equation}
with $\mathcal{G}\subset\mathbb{R}^{n_o}$ the attainable return set over $n_o$ objectives. Our goal is not one point of $\mathcal{P}$ but a \emph{single policy} able to reach any of them on request.

\subsection{Multi-objective MDP reformulation}
P1 induces a multi-objective MDP $\langle\mathcal{S},\mathcal{A},P,\mathbf{R},\Omega,f,\gamma\rangle$~\cite{roijers2013survey} with $\gamma{=}1$ over the finite horizon. The state $\mathbf{s}_t$ gathers per-node indicators---each UAV's position, residual energy, load and speed, and each active user's UAV-centric position, task tuple, queue and channel indicators; the action is \eqref{eq:action}; the vector reward is the negative objective increment, $\mathbf{r}_t=-(\Delta T(t),\Delta E(t))\in\mathbb{R}^{n_o}$, so the expected vector return $\mathbf{G}^\pi=\mathbb{E}_\pi\big[\sum_{t=1}^{T}\mathbf{r}_t\big]$ equals $-(T_{\mathrm{tot}},E_{\mathrm{tot}})$ in expectation. Preferences live on the simplex $\Omega=\{\boldsymbol{\omega}\in\mathbb{R}^{n_o}_{\ge0}:\sum_j\omega_j=1\}$ with linear utility $f(\mathbf{G},\boldsymbol{\omega})=\boldsymbol{\omega}^\top\mathbf{G}$. Not every preference is physically expressible: the achievable front has ends, and preferences past them saturate rather than producing more extreme outcomes---which band is attainable is a property of the corpus, settled in Sec.~\ref{sec:distill}. Rather than solving one scalarized instance of P1 per $\boldsymbol{\omega}$, we seek a \emph{single conditional policy}
\begin{equation}
\pi_\theta(\mathbf{a}_t\mid \mathbf{s}_{1:t},\,\boldsymbol{\omega}),\qquad \boldsymbol{\omega}\in\Omega,
\label{eq:goal}
\end{equation}
whose induced return approaches the Pareto-optimal return at every attainable preference. The aliasing regime of Prop.~\ref{prop:alias} begins at three objectives and is outside this paper's scope.

\subsection{Why not scalarize first}
\begin{proposition}[Convex-hull limitation]
\label{prop:hull}
Let $\mathcal{G}\subset\mathbb{R}^{n_o}$ be the attainable return set. For any $\boldsymbol{\omega}$ in the simplex, maximizers of $\boldsymbol{\omega}^\top\mathbf{G}$ over $\mathcal{G}$ lie on the boundary of $\mathrm{conv}(\mathcal{G})$; hence unsupported Pareto-optimal points---those not on the convex hull---are unattainable by a priori linear scalarization for every $\boldsymbol{\omega}$.
\end{proposition}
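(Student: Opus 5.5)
The argument is a supporting-hyperplane argument, so it rests on linearity of $f(\mathbf{G},\boldsymbol{\omega})=\boldsymbol{\omega}^\top\mathbf{G}$ and needs no property of the UAV model. I will carry it out in three steps.

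\textbf{Step 1: the supremum over $\mathcal{G}$ equals the supremum over the hull.} Fix $\boldsymbol{\omega}\in\Omega$ and let $\mathbf{G}^\star$ maximize $\boldsymbol{\omega}^\top\mathbf{G}$ over $\mathcal{G}$, assuming a maximizer exists. For instance, $\mathcal{G}$ is compact because the horizon is finite and the action space is bounded; otherwise the statement is vacuous. Any $\mathbf{H}\in\mathrm{conv}(\mathcal{G})$ is a finite convex combination $\sum_i\lambda_i\mathbf{G}_i$ with $\mathbf{G}_i\in\mathcal{G}$. By linearity,
\[
\boldsymbol{\omega}^\top\mathbf{H}=\sum_i\lambda_i\,\boldsymbol{\omega}^\top\mathbf{G}_i\le\boldsymbol{\omega}^\top\mathbf{G}^\star .
\]
So $\{\mathbf{H}:\boldsymbol{\omega}^\top\mathbf{H}=\boldsymbol{\omega}^\top\mathbf{G}^\star\}$ is a hyperplane that supports $\mathrm{conv}(\mathcal{G})$ at $\mathbf{G}^\star$.

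\textbf{Step 2: the maximizer lies on the boundary.} Since $\boldsymbol{\omega}\neq\mathbf{0}$ (its entries sum to one), every neighbourhood of $\mathbf{G}^\star$ contains $\mathbf{G}^\star+\epsilon\boldsymbol{\omega}$. That point has strictly larger utility, so it is outside $\mathrm{conv}(\mathcal{G})$. Hence $\mathbf{G}^\star$ is not interior and lies on $\partial\,\mathrm{conv}(\mathcal{G})$. I read ``on the convex hull'' more sharply, as lying on a face of $\mathrm{conv}(\mathcal{G})$ exposed by a nonnegative normal. Step 1 gives exactly this: $\mathbf{G}^\star$ lies on the face exposed by $\boldsymbol{\omega}\ge\mathbf{0}$.

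\textbf{Step 3: unsupported points are never returned.} Call a Pareto point $\mathbf{G}^u\in\mathcal{P}$ unsupported if no $\boldsymbol{\omega}\in\Omega$ makes it a maximizer of $\boldsymbol{\omega}^\top\mathbf{G}$ over $\mathrm{conv}(\mathcal{G})$. Equivalently, no nonnegative supporting hyperplane of the hull passes through it. If some $\boldsymbol{\omega}$ made $\mathbf{G}^u$ a maximizer over $\mathcal{G}$, Step 1 would make it a maximizer over the hull, which is a contradiction. So for every $\boldsymbol{\omega}$, $\mathbf{G}^u$ is not an optimum of the a priori scalarized problem. The concave-front example of~\cite{das1997drawbacks} shows that such points can exist, though the proposition does not require it.

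\textbf{Main obstacle.} The difficulty is definitional rather than technical. A point can lie on the topological boundary of $\mathrm{conv}(\mathcal{G})$ without being a maximizer for any $\boldsymbol{\omega}\in\Omega$, for example on a face whose normal has a negative component. It can also be Pareto-optimal and on the boundary of the hull yet not exposed. I will therefore fix the definition of ``unsupported'' as ``not exposed by any simplex normal'' so that Step 3 holds exactly.

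A second edge case is ties. A weakly supported point can be one of several maximizers when a face of the hull is flat. The claim concerns attainability as the solution a scalarized solver returns, so the safe formulation is that unsupported points are never maximizers at all. I would note that genuinely unsupported points in the strict sense above are never maximizers, ties or not.
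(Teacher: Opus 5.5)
Your proposal is correct and follows the paper's own argument. Both proofs note that a linear utility attains the same maximum over $\mathcal{G}$ and over $\mathrm{conv}(\mathcal{G})$, so every maximizer is a supporting point on the hull's boundary and a point off the supported part of the hull is never returned. Your version is somewhat tighter than the paper's sketch: it needs only that a maximizer exists, not compactness, and your definition of ``unsupported'' as ``not exposed by any simplex normal'' settles the ambiguous phrase ``not on the convex hull'' while covering the paper's reading (points strictly inside the hull) through your Step~2.
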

\begin{proof}
Linear objectives attain their maxima over $\mathcal{G}$ and over $\mathrm{conv}(\mathcal{G})$ at the same value, and over a compact convex set at supporting points of the boundary. A point strictly inside the hull's Pareto face has, for every $\boldsymbol{\omega}$, a strictly larger-utility supporting point, so it is never a maximizer. See also~\cite{das1997drawbacks}.
\end{proof}
Discrete offloading choices in \eqref{eq:action} and the non-monotone propulsion power curve \eqref{eq:prop} make $\mathcal{G}$ non-convex in this problem, so this is not a hypothetical concern. PrefDT sidesteps it because it never scalarizes at training time: it imitates preference-labeled behavior directly, so its reachable set is bounded by corpus coverage, not by hull geometry. The same design keeps training offline: the entire pipeline consumes preference-annotated trajectories, whereas the offline search that produces the rule those trajectories come from (the \emph{teacher} of Sec.~\ref{sec:distill}) costs 606{,}000 slots of live interaction, and a scalarized learner pays a live (re)training per preference. Sec.~\ref{sec:switchcost} does the cost accounting and states the boundary: what offline training removes is live interaction at deployment, not the value of a competent teacher having existed.

% ==================== IV. PREFDT ====================
\section{The Proposed PrefDT Scheduler}
\label{sec:prefdt}
One decision organizes this section: the setting is an input the policy reads at run time, not a term compiled into what the policy was trained to maximize. The pipeline that results runs at two time scales, and Fig.~\ref{fig:arch} separates them. One scale runs once, before the fleet flies, and ends in a frozen artifact. The other runs in every slot and updates no weights.

\begin{figure*}[t]
\centering
\includegraphics[width=\textwidth]{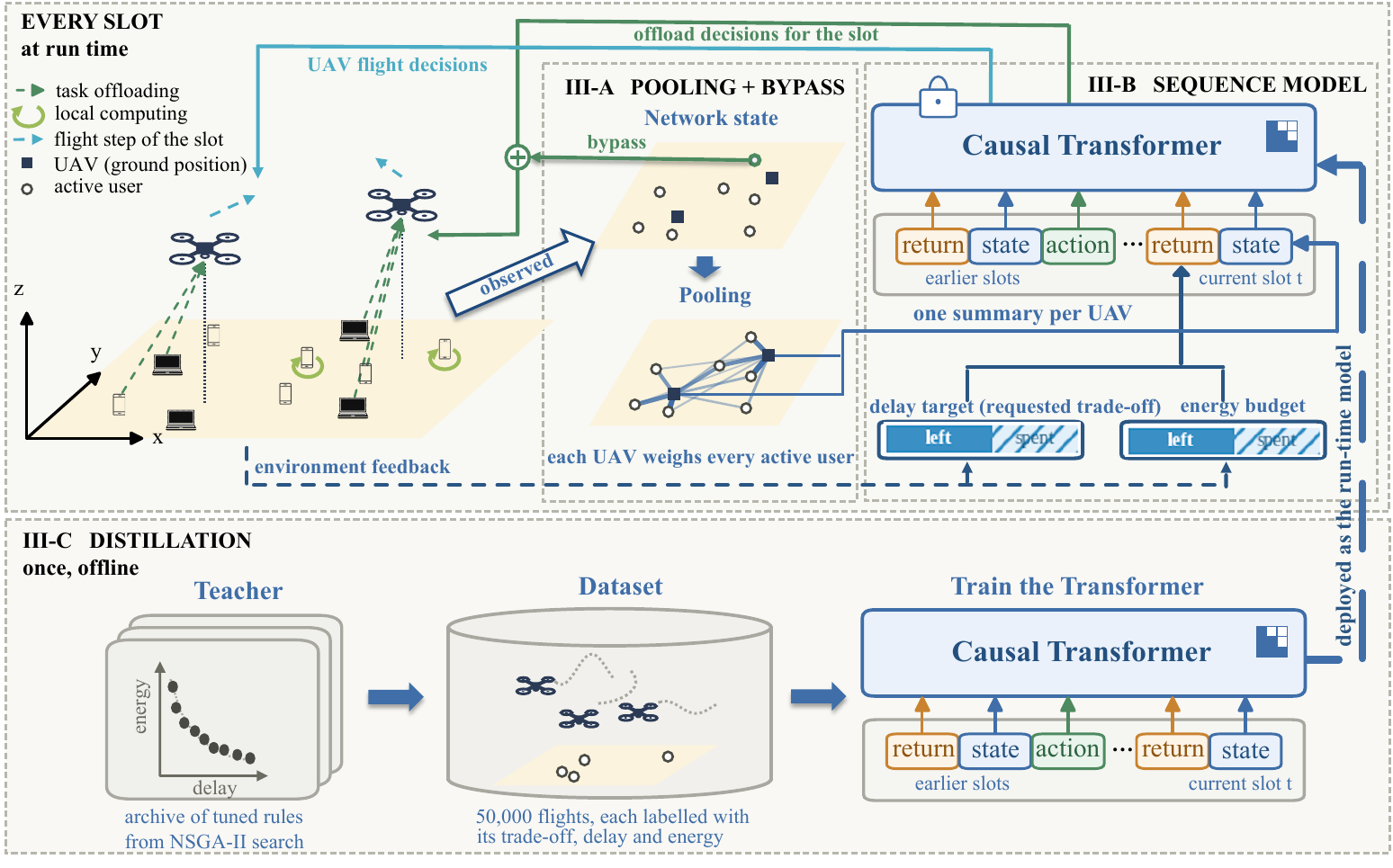}
\caption{The structure of PrefDT. The pipeline runs at two time scales. At run time and in every slot (top): the network state is read as one token per UAV and one per active user; each UAV pools every active user into one summary, and each user's own features bypass the pooling to rejoin the Transformer output ($\oplus$) for that user's decisions. A frozen causal Transformer reads the stream of return, state and action tokens and emits the slot's decisions, which the network executes; the return token's two entries, the delay target set by the requested trade-off and the energy budget, are decremented by the delay and energy the environment reports back. No weight changes at run time. Offline and once (bottom): the teacher, an archive of scheduling rules tuned by NSGA-II, is flown to record 50,000 flights, each labelled with the trade-off it struck and the delay and energy it cost; the Transformer is trained to imitate these flights and is then frozen into the model that runs above.}
\label{fig:arch}
\end{figure*}

The offline scale builds the corpus and trains on it. A teacher's rule family is searched once, and each of its rollouts is paired with the setting whose outcome that rollout realized; the pairs are the training corpus. A causal sequence model is trained on the corpus by imitation, and a regressor is fitted beside it that turns a setting into the return the corpus attains there. Training then stops. The weights, the regressor and the per-objective scales are archived together, and none of them changes again.

The per-slot scale is a loop. At the start of a flight the requested setting becomes a return target, and a requested energy budget replaces that target's energy entry. In each slot the network state becomes one token per UAV and one per active user, and each UAV pools the active users into a summary of fixed width. The summary, the preference and the current return target enter the token stream. The sequence model reads the most recent slots of that stream and returns one output per slot, from which the flight controls are decoded; each user's association and offload fraction are decoded from that output together with that user's own token. The environment executes the joint action and returns the delay and energy actually consumed in that slot. Both values are subtracted from the return target before the next slot starts. One forward pass per slot, with nothing solved or searched, and no weight updated.

\subsection{The state encoder: tokens, pooling, and the per-user path}
\label{sec:encoder}
\label{sec:token}
\label{sec:pooling}
\label{sec:peruser}
The active set $\mathcal{K}(t)$ changes size every slot, while a sequence model consumes a fixed layout of inputs. The association and offloading of \eqref{eq:action} are decided per member, so a description of the fleet as a whole is not sufficient. The encoder must accept a set of any size and still return a decision for every member.

The encoder therefore has two paths. One condenses $\mathcal{K}(t)$ into a summary of fixed width. That summary is symmetric in its members and therefore retains no identity, so no per-member decision can be read from it. The other path supplies each member's features to the heads that decide for it, without passing through the summary.

One frozen model then serves an active set of any size, and still decides per member. Changing the number of users requires no retraining, because no part of the network is sized by that number. The summary is defined on any subset of $\mathcal{K}(t)$, so a report lost on the uplink leaves a smaller set rather than an incomplete input. The scheduler therefore keeps deciding on a link that drops reports, with no substitute value invented for the members that are missing.

Each slot is encoded as one token per node. A UAV token holds that aircraft's normalized position, residual energy, number of served users and speed. A user token holds that user's position in the serving UAV's polar frame, its task tuple, its queue indicator, its channel power gain, and a block $\tilde{\boldsymbol{\psi}}_u$ of four closed-form features,
\begin{align}
\mathbf{z}^{\mathrm{uav}}_m(t)&=\big[\tilde{\mathbf{q}}_m(t),\ \tilde e^{\mathrm{res}}_m(t),\ |\mathcal{U}_m(t)|,\ \tilde v_m(t)\big],
\label{eq:tokuav}\\
\mathbf{z}^{\mathrm{usr}}_u(t)&=\big[\tilde r_{u,m},\ \tilde\phi_{u,m},\ \tilde D_u,\ \tilde C_u,\ \tilde\tau^{\mathrm{dl}}_u,\ \tilde q_u,\ \tilde h_{u,m},\ \tilde{\boldsymbol{\psi}}_u\big],
\label{eq:tokusr}
\end{align}
where every entry is standardized by statistics archived with the training set, $\tilde x=(x-\mu_{\mathcal{D}})/\sigma_{\mathcal{D}}$, so that training and rollout use identical scales. The UAV tokens number $M$ in every slot, while the user tokens number $|\mathcal{K}(t)|$ and therefore change in count from slot to slot.

The four features in $\boldsymbol{\psi}_u$ are the two latencies that decide user $u$'s current task and their two decision-relevant ratios,
\begin{equation}
\boldsymbol{\psi}_u=\big[\,T^{\mathrm{loc}}_u,\ \ T^{\mathrm{off}}_u,\ \ T^{\mathrm{loc}}_u/\tau^{\mathrm{dl}}_u,\ \ T^{\mathrm{off}}_u/T^{\mathrm{loc}}_u\,\big],
\label{eq:physfeat}
\end{equation}
with $T^{\mathrm{loc}}_u$ and $T^{\mathrm{off}}_u$ the local and offloading latencies of \eqref{eq:tloc}--\eqref{eq:toff}, evaluated at the current channel state and edge load. Both are closed-form functions of quantities the state already contains: whatever the physical models compute in closed form, the policy is not required to learn.

Each UAV condenses the active users into a single summary by masked cross-attention. For each of $H_a$ heads of width $d_h$, the query $\mathbf{q}^{(i)}_m$ is a linear projection of the UAV token, and the key $\mathbf{k}^{(i)}_u$ and the value $\mathbf{v}^{(i)}_u$ are linear projections of the user token; the attention weights are
\begin{equation}
\alpha^{(i)}_{u,m}(t)=\frac{\exp\!\big(\langle\mathbf{q}^{(i)}_m,\mathbf{k}^{(i)}_u\rangle/\sqrt{d_h}+\mu_u(t)\big)}
{\sum_{u'}\exp\!\big(\langle\mathbf{q}^{(i)}_m,\mathbf{k}^{(i)}_{u'}\rangle/\sqrt{d_h}+\mu_{u'}(t)\big)},
\label{eq:attn}
\end{equation}
with the mask $\mu_u(t)=0$ for $u\in\mathcal{K}(t)$ and $-\infty$ otherwise, so that inactive users receive exactly zero weight. Summing the values under these weights and projecting the concatenated heads gives the per-UAV summary $\mathbf{p}_m(t)$, and the encoder's representation of the slot is
\begin{equation}
s_t=\big[\mathbf{z}^{\mathrm{uav}}_1;\mathbf{p}_1(t);\ \cdots;\ \mathbf{z}^{\mathrm{uav}}_M;\mathbf{p}_M(t)\big].
\label{eq:stok}
\end{equation}
This representation has one width for every value of $|\mathcal{K}(t)|$, and it contains no padded positions. Each summary is a weighted sum over the user tokens, so it is unchanged by any permutation $\pi$ of the active users,
\begin{equation}
\mathbf{p}_m\big(\{\mathbf{z}^{\mathrm{usr}}_{\pi(u)}\}\big)=\mathbf{p}_m\big(\{\mathbf{z}^{\mathrm{usr}}_{u}\}\big),
\label{eq:perminv}
\end{equation}
and carries no information about which user occupies which index.

Each user's own token is therefore supplied directly to that user's two decision heads, concatenated with the sequence model's output $\mathbf{e}_t$ at slot $t$. Writing $x_u(t)\in\{0,1,\dots,M\}$ for user $u$'s categorical association, with $0$ denoting local execution,
\begin{align}
\boldsymbol{\lambda}_u(t)&=\mathrm{MLP}_{c}\big([\mathbf{e}_t;\ \mathbf{z}^{\mathrm{usr}}_u(t)]\big)\in\mathbb{R}^{M+1},
\label{eq:bypass1}\\
\hat\rho_u(t)&=\tfrac12\Big(1+\tanh\,\mathrm{MLP}_{\rho}\big([\mathbf{e}_t;\ \mathbf{z}^{\mathrm{usr}}_u(t)]\big)\Big),
\label{eq:bypass2}
\end{align}
while the flight controls are decoded from $\mathbf{e}_t$ alone, $\hat{\mathbf{a}}^{\mathrm{fly}}_t=\tanh\,\mathrm{MLP}_{\mathrm{f}}(\mathbf{e}_t)$. Permuting the active users permutes the outputs of \eqref{eq:bypass1}--\eqref{eq:bypass2} in the same order, so this second path is \emph{equivariant}.
\subsection{The return channel: the vector return-to-go, its decrement, and the sequence model}
\label{sec:channel}
\label{sec:inject}
\label{sec:theory}
\label{sec:backbone}
This subsection builds the value that carries the setting into the policy. Two kinds of request arrive through it. One is the energy--delay trade-off. The other is a total energy budget in joules. Both are chosen before the flight and cannot be derived by the policy from the network state, so both have to be supplied as an input the policy reads. The second one also has to change as the fleet spends.

The value is a vector with one entry per objective rather than a single total. Each entry is the return still to be collected on that objective, so the energy entry can be initialized at the requested budget and reduced at every slot by what that slot consumed. The sequence model reads a window of slots rather than a single slot, so it sees that reduction as it happens. A single total does neither of these things. Under three or more objectives it names a whole family of front points at once, so it cannot say which of them is being asked for. And at any number of objectives, including two, no part of it is a quantity in joules, so a budget can neither be requested through it nor reduced inside it.

Every setting therefore names a single outcome the fleet can produce. A total energy budget can be requested in joules and enforced against what the fleet has actually spent: an overspend enters the value the policy reads one slot after it begins, with nothing re-issued and no outer loop.

The conditioning value at slot $t$ is the per-objective return-to-go, the suffix sum $\mathbf{g}_t=\sum_{t'\ge t}\mathbf{r}_{t'}\in\mathbb{R}^{n_o}$, the componentwise analogue of DT's scalar return-to-go~\cite{chen2021dt,zhu2023peda}. Raw delay and energy returns differ by about two orders of magnitude, so each entry is rescaled by a constant of its own,
\begin{equation}
\hat g_{t,j}=\frac{g_{t,j}}{\eta_j},\qquad
\eta_j=\max_{\mathcal{D}}g_{1,j}-\min_{\mathcal{D}}g_{1,j},
\label{eq:rtgnorm}
\end{equation}
where $\boldsymbol{\eta}$ collects the constants and $\oslash$ denotes element-wise division. The rescaling is a pure ratio: no shift is subtracted, and no normalization layer follows the projection. The constants are per-objective, fixed, and archived with the dataset, so that training and rollout use identical scales.

The preference enters the token stream at three sites,
\begin{equation}
s^*_t=s_t\oplus\boldsymbol{\omega},\qquad
a^*_t=a_t\oplus\boldsymbol{\omega},\qquad
g^*_t=\hat{\mathbf{g}}_t\odot\boldsymbol{\omega},
\label{eq:inject}
\end{equation}
by concatenation to the state and action tokens and by element-wise weighting of the normalized return-to-go~\cite{zhu2023peda}. Concatenation before any layer makes the preference part of every position's representation, rather than a separate token that attention may or may not route to. The mechanism is PEDA's, unchanged; what PrefDT adds on top of it is the pipeline of Sec.~\ref{sec:pipeline}, and Sec.~\ref{sec:attribution} removes each addition singly.

Trajectories are serialized in DT's returns-first token order $(g^*_t,s^*_t,a^*_t)$~\cite{chen2021dt}, so a context of $K_{\mathrm{ctx}}$ slots spans $3K_{\mathrm{ctx}}$ tokens, each summed with a learned per-slot timestep embedding. A causal GPT backbone (3 layers, 4 heads, width 256, $K_{\mathrm{ctx}}{=}20$) predicts $\mathbf{a}_t$ at the position of $s^*_t$, and its output at that position is the $\mathbf{e}_t$ of \eqref{eq:bypass1}--\eqref{eq:bypass2}.

Training minimizes DT's standard mixed imitation loss~\cite{chen2021dt}: L2 on the continuous actions, the flight controls and the offload ratios, and cross-entropy on the per-user association, equally weighted. Actions are mapped to $[-1,1]$ beforehand by the standard rescaling $\tilde{\mathbf{a}}=2(\mathbf{a}-\mathbf{a}_{\min})/(\mathbf{a}_{\max}-\mathbf{a}_{\min})-1$. AdamW at $10^{-4}$ with warmup, batch 256; convergence in 40k steps. The policy head is deterministic by default, and sampling the association head is an inference-time option rather than a training change: it draws from the existing softmax, and no parameter is refit.

Across the slots of one context the return-to-go decrements by the reward each slot realized,
\begin{equation}
\hat{\mathbf{g}}_{t+1}=\hat{\mathbf{g}}_t-\mathbf{r}_t\oslash\boldsymbol{\eta},
\label{eq:decr}
\end{equation}
so an energy entry requested at a budget, $\hat g_{1,E}=-\mathcal{B}/\eta_E$, stands after $t$ slots at
\begin{equation}
\hat g_{t,E}=-\frac{\mathcal{B}-\sum_{t'<t}\Delta E(t')}{\eta_E}=-\frac{\mathcal{B}-\mathrm{cum}E(t)}{\eta_E}.
\label{eq:identity}
\end{equation}
The energy entry of the conditioning vector is the budget not yet spent, up to the fixed scale $\eta_E$. A revision of the total to $\mathcal{B}'$ is a single assignment to \eqref{eq:identity}, and requires no further change to the loop.

How many entries the conditioning value needs is set by the dimension of the front.
\begin{proposition}[Utility aliasing]
\label{prop:alias}
Let $\mathcal{P}\subset\mathbb{R}^{n_o}$ be a compact $(n_o{-}1)$-manifold and let $\boldsymbol{\omega}$ have strictly positive entries. The level set $\{\mathbf{g}:\boldsymbol{\omega}^\top\mathbf{g}=c\}$ meets $\mathcal{P}$ transversally in a set of dimension $n_o{-}2$: a point for $n_o{=}2$, and a positive-dimensional family for $n_o\ge3$.
\end{proposition}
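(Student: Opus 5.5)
The plan is to read the claim as a statement about the intersection of a hyperplane with a hypersurface, and to get the dimension from the transversal-intersection theorem. Write $L_c=\{\mathbf{g}:\boldsymbol{\omega}^\top\mathbf{g}=c\}$. Because $\boldsymbol{\omega}\neq 0$, $L_c$ is an affine hyperplane, so it is a submanifold of $\mathbb{R}^{n_o}$ of dimension $n_o-1$ with constant normal $\boldsymbol{\omega}$. The front $\mathcal{P}$ is by hypothesis an $(n_o-1)$-manifold, which I will take to be embedded and $C^1$. The proof then has two parts. First I will justify transversality. Then I will count dimensions: $\dim(\mathcal{P}\cap L_c)=(n_o-1)+(n_o-1)-n_o=n_o-2$. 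This is a point for $n_o=2$ and a manifold of dimension $n_o-2\ge1$ for $n_o\ge3$. Compactness of $\mathcal{P}$ is used only to ensure the intersection is compact, hence a finite set of points when $n_o=2$. It also ensures the family is nonempty for $c$ strictly between $\min_{\mathcal{P}}\boldsymbol{\omega}^\top\mathbf{g}$ and $\max_{\mathcal{P}}\boldsymbol{\omega}^\top\mathbf{g}$, by connectedness/intermediate value along the manifold.

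The key step is transversality. It is equivalent to saying that $\boldsymbol{\omega}$ is never normal to $\mathcal{P}$ at an intersection point, i.e.\ $\boldsymbol{\omega}\notin N_{\mathbf{g}}\mathcal{P}$. Equivalently again, the restriction $f=\boldsymbol{\omega}^\top\mathbf{g}|_{\mathcal{P}}$ has no critical point on $L_c$. Given this, $c$ is a regular value of $f$, and the regular value theorem gives $f^{-1}(c)$ as a submanifold of $\mathcal{P}$ of codimension one, i.e.\ of dimension $n_o-2$. This is the form I will actually use, because it avoids stating a general transversality theorem.

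I expect the obstacle to be that transversality cannot hold for all $c$. Since $\mathcal{P}$ is compact, $f$ attains its extrema, and those points are critical: they are exactly the supported points of Prop.~\ref{prop:hull} at which $\boldsymbol{\omega}$ is normal. My approach is to restrict to regular values. By Sard's theorem, almost every $c$ is regular, so the statement holds for almost every level. Alternatively, I can use Pareto structure directly. At a point of a Pareto front, the outward normal cone lies in the nonnegative orthant. Hence a strictly positive $\boldsymbol{\omega}$ can be normal only at supporting points, and these occur only at the extreme levels $c=\max f$ or $c=\min f$. I would state the proof for interior levels $\min f<c<\max f$ that are regular, and remark that tangential contact at the extreme levels is the supported-point case already covered by Prop.~\ref{prop:hull}.

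Finally, I will interpret the result for $n_o=2$. There $\mathcal{P}$ is a curve that is monotone decreasing, because Pareto points are mutually nondominated. The line $L_c$ has negative slope $-\omega_1/\omega_2$. For $n_o=2$ the conclusion is the finiteness shown above; if additional convexity of the front were assumed, the intersection would be a single point, and that is the uniqueness the conditioning channel relies on.
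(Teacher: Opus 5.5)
Your core argument follows the paper's route: its proof is a transversality count on the dimension of the intersection, sketched in Appendix~B. Your regular-value version, $(n_o-1)+(n_o-1)-n_o=n_o-2$ with Sard admitting almost every level, is sound. You are also right, and more careful than the statement, that transversality fails at critical levels and that ``a point'' at $n_o=2$ can only mean a zero-dimensional, hence finite, set.

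Two of your supporting claims are false and should be deleted.

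(i) The ``Pareto-structure'' alternative does not work. A smooth Pareto front does admit a normal in the nonnegative orthant. But $\boldsymbol{\omega}$ is normal to $\mathcal{P}$ exactly at the critical points of $f=\boldsymbol{\omega}^\top\mathbf{g}|_{\mathcal{P}}$, and on a non-convex front these include unsupported points and interior levels. Take $\boldsymbol{\omega}=(\tfrac12,\tfrac12)$. The front $g_2=(1-\sqrt{g_1})^2$, $g_1\in[0,1]$, touches the level $c=\tfrac14$ tangentially at the unsupported point $(\tfrac14,\tfrac14)$. A front with two outward bulges of unequal height has a critical level, at the tip of the lower bulge, strictly inside $(\min f,\max f)$. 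The paper stresses that $\mathcal{G}$ is non-convex here, so only the Sard restriction survives. Tangency is also not always the supported-point case of Prop.~\ref{prop:hull}.

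(ii) Convexity does not give a single point at $n_o=2$. The quarter circle $g_1^2+g_2^2=1$, $\mathbf{g}\ge0$, is concave and fully supported. It meets every level $c\in(\tfrac12,\tfrac{1}{\sqrt2})$ twice, once on each side of the supported point. The non-convex example above likewise meets every $c\in(\tfrac14,\tfrac12)$ twice. A single point at every level is equivalent to $f$ being strictly monotone along a connected front arc. That holds, for example, when the front's slope never equals $-\omega_1/\omega_2$ in its interior. Neither convexity nor the dimension count supplies it.

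Three smaller repairs are needed.

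First, nonemptiness for $\min f<c<\max f$ uses connectedness of $\mathcal{P}$. The hypothesis does not give this, and Pareto fronts often lack it. Either assume it or restrict to $c\in f(\mathcal{P})$.

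Second, a Pareto front cannot be a compact hypersurface without boundary. Projection onto the first $n_o-1$ coordinates is injective on it, so by invariance of domain its image would be a nonempty open compact subset of $\mathbb{R}^{n_o-1}$. You therefore need the preimage theorem for manifolds with boundary, taking $c$ regular for both $f$ and $f|_{\partial\mathcal{P}}$. Sard still gives this for almost every $c$, and $f^{-1}(c)$ is then an $(n_o-2)$-manifold with boundary $f^{-1}(c)\cap\partial\mathcal{P}$.

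Third, Sard for a real-valued function on an $(n_o-1)$-manifold needs class $C^{n_o-1}$. Your $C^1$ assumption is therefore enough only when $n_o=2$.
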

The proof is a transversality argument on the dimension of the intersection; the sketch is in Appendix~B. One entry therefore names a single front point at two objectives and a whole family of them at three or more, while $n_o$ entries name one point at every $n_o$.

Two further differences hold at every $n_o$, including the $n_o{=}2$ case the proposition leaves sufficient. A scalar utility has no component that is the energy budget, so the identity \eqref{eq:identity} is available to the vector form alone. And the vector form supplies $n_o$ supervision targets per timestep where the scalar supplies one, with the decrement \eqref{eq:decr} propagating per-objective rather than aggregate progress.
\subsection{The distillation pipeline: corpus construction and inference}
\label{sec:pipeline}
\label{sec:distill}
\label{sec:infer}
Conditional imitation needs recorded runs labeled with the outcome each achieved, and scheduling supplies none: no archive of flights exists in which each flight carries the trade-off it struck. The corpus therefore has to be produced. Which flight is stored under which setting is what a setting comes to mean, and no loss function decides that afterwards.

The corpus is distilled from a teacher: a compact physical rule whose ten parameters are searched once, offline, and whose rollouts become the training data. A policy trained by imitation can only produce behaviors its corpus contains, so the teacher sets the ceiling, and choosing it is part of the algorithm rather than a tuning detail. That choice is made on physics: the cheapest way to save energy is to hold the speed at the propulsion power minimum and trade through partial offloading, so a teacher qualifies only if its members sweep the front that way. Each setting is then paired with the member whose realized cost composition matches it, which is what makes a setting an outcome share rather than an importance weight over a training reward. Settings outside the band that family spans are refused before any data is generated, and the extreme members are demonstrated more heavily than uniform sampling would demonstrate them, because imitation regresses toward the behavioral mean.

The search is paid once. Afterwards a frozen model answers every setting the archive could have answered, and three requests the archive could not: a point between two of its members, a total energy budget, and a revision of that budget in flight. The setting axis is calibrated onto the front the trained model actually achieves, so distinct settings name distinct attainable outcomes in an exact order and at the requested density. A denser front costs further rollouts and no further search.

The shipped corpus is distilled from the teacher's archive: rollouts of the NSGA-II-searched rule family of Sec.~\ref{sec:classes}, whose members hold $v^*$ throughout and sweep the front through the offload ratio. The archive also fixes which preferences are physically expressible. With $\hat{\mathbf{G}}(\boldsymbol{\omega})$ the return the family realizes at $\boldsymbol{\omega}$, the attainable band is
\begin{equation}
\Omega^*=\Big\{\boldsymbol{\omega}\in\Omega:\
\varsigma_1\big(\hat{\mathbf{G}}(\boldsymbol{\omega})\big)\in[0.052,\ 0.676]\Big\},
\label{eq:omegastar}
\end{equation}
where $\varsigma_1(\cdot)$ is the delay share of the normalized return and the interval is the span the family covers. Preferences outside $\Omega^*$ are rejected at data generation, so the corpus never annotates a setting the physics cannot express.

Dataset generation follows the D4MORL recipe~\cite{zhu2023peda} in three steps: sample preferences from Dirichlet distributions at three entropy tiers ($\alpha\in\{1,3,8\}$, one third of the draws each) and reject $\boldsymbol{\omega}\notin\Omega^*$, the band of \eqref{eq:omegastar} widened by a tolerance of $0.05$; match each accepted $\boldsymbol{\omega}$ to the archive member whose realized return \emph{share} is nearest,
\begin{equation}
i^\star(\boldsymbol{\omega})=\arg\min_{i}\ \big\lVert\boldsymbol{\varsigma}\big(\mathbf{G}^{\pi_i}\big)-\boldsymbol{\varsigma}\big(\hat{\mathbf{G}}(\boldsymbol{\omega})\big)\big\rVert_2;
\label{eq:match}
\end{equation}
and roll out $\pi_{i^\star}$, storing the preference-annotated trajectory---$50{,}000$ trajectories in all. Because the match is on realized share, $\boldsymbol{\omega}$ denotes an outcome composition: a front point whose normalized cost is $\omega_1$ parts delay. Two refinements act at the extremes. Corner emphasis redirects a quarter of the draws into the share window of the archive's four lowest-delay members. Quality diversity mixes archive-member and scripted-heuristic rollouts in an 80/20 ratio, the heuristic pool spanning three fixed modes, which varies outcome quality at fixed $\boldsymbol{\omega}$.

Three maps stand between a requested setting and the fleet's behavior: a conditioner $\boldsymbol{\omega}\mapsto\hat{\mathbf{g}}_1$, a remap $c\mapsto\boldsymbol{\omega}$, and the decrement loop that carries $\hat{\mathbf{g}}_1$ through the horizon. The conditioner is a preference-to-return regressor fitted offline on the corpus's clean expert episodes and frozen with the weights,
\begin{equation}
f_{\mathrm{reg}}=\arg\min_{\mathbf{A}}\sum_{(\boldsymbol{\omega}_i,\mathbf{G}_i)\in\mathcal{D}_{\mathrm{exp}}}\big\lVert\mathbf{G}_i-\mathbf{A}\,\phi(\boldsymbol{\omega}_i)\big\rVert_2^{2},
\label{eq:reg}
\end{equation}
with quadratic features $\phi$. Its targets are returns the corpus contains at that preference, never an ideal point, which lies outside the attainable set.

Two gates run before any training: the binned map $\boldsymbol{\omega}\mapsto\mathbb{E}[\mathbf{G}\,|\,\boldsymbol{\omega}]$ is checked for monotonicity, and \eqref{eq:reg} is checked for fit. Each conditioner is fitted per corpus and archived with it, and a corpus build that would overwrite another corpus's conditioner aborts by construction.

The remap is fitted after training. It is a monotone map $c\mapsto\boldsymbol{\omega}(c)$ carrying an abstract setting $c\in[0,1]$ onto the measured, non-dominated portion of the front the trained model achieves, and it is pure inference: no weight, conditioner or checkpoint changes.

At rollout the normalized return-to-go initializes from the conditioner,
\begin{equation}
\hat{\mathbf{g}}_1=f_{\mathrm{reg}}(\boldsymbol{\omega})\oslash\boldsymbol{\eta},
\label{eq:init}
\end{equation}
and from there decrements by \eqref{eq:decr}; actions decode deterministically. Algorithm~\ref{alg:sweep} composes the three maps into one sweep. It takes the frozen $\pi_\theta$, a setting grid $\mathcal{C}\subset[0,1]$, the remap, the conditioner and the scales $\boldsymbol{\eta}$, and returns the non-dominated subset of the returns the sweep realizes. Its outer loop turns one setting into one target: the remap and the conditioner are applied once, and a requested budget is written into $\hat g_E$ in place of the conditioner's energy target. Its inner loop carries that target through the horizon: one forward pass per slot, the decrement of \eqref{eq:decr}, and the same $\hat g_E$ open to a mid-flight revision. A $K$-point front therefore costs $KT$ forward passes against one frozen $\pi_\theta$.
\begin{algorithm}[t]
\caption{PrefDT front generation with a recalibrated setting axis (generalizing DT's return-conditioned evaluation loop~\cite{chen2021dt})}
\label{alg:sweep}
\begin{algorithmic}[1]
\Require frozen $\pi_\theta$; setting grid $\mathcal{C}\subset[0,1]$; remap $c\mapsto\boldsymbol{\omega}(c)$ (fitted on calibration seeds); conditioner $f_{\mathrm{reg}}$; scales $\boldsymbol{\eta}$
\For{$c\in\mathcal{C}$}
  \State $\boldsymbol{\omega}\gets\boldsymbol{\omega}(c)$;\ \ $\hat{\mathbf{g}}\gets f_{\mathrm{reg}}(\boldsymbol{\omega})\oslash\boldsymbol{\eta}$\ \ (budget override: $\hat g_E\gets -\mathcal{B}/\eta_E$);\ \ reset env;\ \ $\xi\gets\emptyset$
  \For{$t=1,\dots,T$}
    \State build $(g^*_t,s^*_t)$ from \eqref{eq:inject} with $\hat{\mathbf{g}}_t=\hat{\mathbf{g}}$; append to $\xi$
    \State $\mathbf{a}_t\gets\pi_\theta\big(\cdot\mid\xi_{t-K_{\mathrm{ctx}}+1:t},\boldsymbol{\omega}\big)$; step env $\rightarrow\mathbf{r}_t$
    \State $\hat{\mathbf{g}}\gets\hat{\mathbf{g}}-\mathbf{r}_t\oslash\boldsymbol{\eta}$\ \ \emph{(revision = overwrite $\hat g_E$; \eqref{eq:identity})}
  \EndFor
  \State record $\mathbf{G}(c)=\sum_{t}\mathbf{r}_t$
\EndFor
\State \Return non-dominated subset \eqref{eq:front} of $\{\mathbf{G}(c)\}_{c\in\mathcal{C}}$
\end{algorithmic}
\end{algorithm}

\section{Experimental Setup}
\label{sec:setup}

\subsection{Scenario}
\label{sec:scenario2}
The environment implements Sec.~\ref{sec:model}: $M{=}2$ UAVs at $H{=}100$\,m over a $1000{\times}1000$\,m area, $U{=}10$ users of whom four to eight are active in any slot, and $T{=}100$ slots of 1\,s. Tasks, deadlines, compute rates, partial offloading, rotary-wing propulsion and the probabilistic LoS air--ground channel are as modelled there, with every constant listed in Appendix~A. All methods share this environment and identical evaluation seeds, with one exception: the degradation study of Sec.~\ref{sec:budget} adds a mid-episode propulsion multiplier, pinned so that it cannot have perturbed anything else (Sec.~\ref{sec:configs}).

\subsection{Methods under comparison: three declared classes}
\label{sec:methods}
\label{sec:classes}
Twenty-six method variants are evaluated under one shared protocol. The comparison set spans methods with incompatible access assumptions and roles, so we declare three classes before any of their numbers are quoted (Table~\ref{tab:classes}). Table~\ref{tab:family} reports the variants the main text reads; Appendix~C lists every variant with its configuration.

\begin{table}[t]
\centering
\caption{The three declared classes of the comparison set.}
\label{tab:classes}
\scriptsize
\setlength{\tabcolsep}{3pt}
\begin{tabular}{@{}p{0.9cm}p{3.1cm}p{1.9cm}p{1.7cm}@{}}
\toprule
Class & Members & Provenance & A new trade-off costs\\
\midrule
1 Learned schedulers & PrefDT; $\boldsymbol{\omega}$-conditioned IQL~\cite{kostrikov2022iql}; PGMORL~\cite{xu2020pgmorl}; PEDA~\cite{zhu2023peda} ported to this problem; per-preference PPO; $\boldsymbol{\omega}$-BC; unconditioned DT; Lagrangian CMDP~\cite{tessler2019rcpo}; single-toggle ablations of Sec.~\ref{sec:prefdt} & Three published methods run by us; the rest built by us & One rollout if conditioned; one retraining per preference or per budget otherwise\\
\addlinespace[2pt]
2 Model-based reference bounds & SCA oracle, supplied the full future arrival sequence; SCA-MPC over a five-slot window & Literature method~\cite{wu2018multiuav,jtoratc}, our instantiation & One fresh solve per setting\\
\addlinespace[2pt]
3 Distillation teacher & Teacher: NSGA-II~\cite{deb2002nsga2} over our ten-parameter rule; the source of the training corpus & Search operator published, rule ours & One re-search; front size is the population\\
\bottomrule
\end{tabular}
\end{table}

\emph{Class 1} is the class the front-quality claim is scoped to. $\boldsymbol{\omega}$-conditioned IQL is trained on \emph{our own} corpus, so that the comparison isolates architecture from data, and the Lagrangian CMDP is retrained once per budget level.

\emph{Class 2} is instantiated over our own channel and propulsion models, at the corpus's per-objective scales. The SCA oracle is restarted from hover, from the demand centroid, and from jittered variants of each, since BCD is local and a bound is only a bound if it was given a fair search; SCA-MPC's iteration budget is deliberately smaller than the oracle's, because an MPC given the oracle's budget would not be the method this baseline exists to represent. Neither can take a setting: a new preference is a fresh solve rather than a fresh token.

\emph{Class 3} exists because its rule is \emph{our own} composition. No published UAV-MEC baseline exposes a search space small enough for a 2000-evaluation budget, so we composed one from standard elements of this literature---a latency-comparison offload gate, a service-radius association rule, and centroid-tracking motion (four global genes plus three per-UAV motion gains, ten in all)---and searched it with NSGA-II, whose published contribution is the search operator rather than the rule.
\subsection{Metrics and protocol}
\label{sec:protocol}
\label{sec:dissoc}
Front quality is hypervolume (HV~\cite{zitzler1999hv}) with IGD, both per-seed against a reference point and reference front shared by every method compared.

Front quality does not measure whether settings are followed---a scheduler can produce an excellent front while ignoring every request---so the conditioning channel carries its own pair of readings. A setting is followed when different settings produce different outcomes, and the outcomes arrive in the requested order. \emph{Fidelity} measures the order: per seed, the rank correlation $|\rho|$ between the settings and the front positions they produce, with $|\rho|{=}1$ meaning every outcome arrives exactly in the requested order. Fidelity alone is insufficient, because perfect ordering can be held over a narrow segment of the front. \emph{Reach} closes that gap: the span, in seconds of delay, of the outcomes the settings produce on one seed. Reach alone is insufficient in turn, because a wide span can be produced with no order at all. Both defects occur among the systems evaluated in Sec.~\ref{sec:results}, so the two readings are used only together, against a criterion fixed before the decisive experiments:
\begin{equation}
\text{operable}\iff |\rho|\ge0.9\ \wedge\ \text{reach}\ge0.8\cdot W_{\mathrm{teacher}},
\label{eq:criterion}
\end{equation}
where $W_{\mathrm{teacher}}$ is the span that the teacher's archive itself attains per seed. The requirement therefore reads: near-perfect ordering, over at least $80\%$ of the span the teacher covers---$493$\,s in this study. Both clauses are per-seed statistics, because mixing a pooled threshold with a per-seed statistic biases the test. Neither constant was chosen after the results were seen; the pair is this study's operating point, not a proposed universal. $|\rho|$ and reach are reported throughout on the full uniform-$\boldsymbol{\omega}$ grid; what the recalibrated setting axis of Sec.~\ref{sec:infer} adds is reported in Sec.~\ref{sec:corpus}.

A budget's violation rate is read with its tracking error, since a policy that ignores its setting downward scores a flattering violation rate. Every inferential comparison uses 100 paired evaluation seeds and a two-sided Wilcoxon signed-rank test with the exact null distribution, computed by a subset-sum dynamic program, since Monte-Carlo nulls cannot resolve the $10^{-7}$-scale $p$-values a paired design at this sample produces; relative effects carry bootstrap confidence intervals. Appendix~C gives the machinery and the controls behind it.

% ==================== VII. RESULTS ====================
\subsection{Experiment configurations}
\label{sec:configs}
How every experiment in Sec.~\ref{sec:results} was run---its grid, its perturbation and its seed count---is tabulated in Appendix~C of the supplementary material; seven conventions hold throughout and are not repeated there. An ablation toggles one component at a time, with architecture, training schedule and seeds otherwise fixed. Readings on the PPO-corpus configuration and on the archive corpus, which is the shipped path, are never differenced against each other. Every arm is evaluated on a frozen checkpoint with no retraining. Inferential rows are read at $n{=}100$; descriptive rows carry the $n$ the table prints, and the two are never differenced against each other---where a headline comparison needs a row that Table~\ref{tab:family} prints as descriptive, that row is re-evaluated at $n{=}100$ first. Calibration and evaluation seeds are disjoint wherever a fitted component---the recalibration remap---could otherwise be scored on its own fitting data. Hypervolume is read against one shared reference point and pooled reference front, recomputed whenever the method set changes; the reference file carries a hash of the contributing method set, so silent drift is structurally impossible. And the degradation hook of Sec.~\ref{sec:budget} reproduces the unmodified simulator bit for bit at multiplier 1.0---a test on exact array equality, not a tolerance---so the one study that perturbs the environment cannot have moved a number in any other. The cost study alone is not evaluated on the shared grid: it is timed on one Apple M1 core with \texttt{OMP\_NUM\_THREADS}, \texttt{MKL\_NUM\_THREADS} and \texttt{VECLIB\_NUM\_THREADS} set to 1, the routes run one after another and the machine otherwise idle, the unit being one setting---one episode of $T$ slots for PrefDT and for both solvers, one 2M-step training run for the scalarized learner, one complete search for the teacher---so that a $K$-point front is $K$ times the per-setting value and is never timed as a whole. Everything else ran on one rented cloud instance (one NVIDIA RTX 4090D, 24\,GB, on a 24-vCPU allocation of a 256-core host); the Apple M1 machine served development, smoke tests, and the timing study alone.

\section{Experimental Evaluation}
\label{sec:results}

\subsection{Front quality, fidelity, and reach: comparing the field and selecting from it}
\label{sec:jointreading}
\label{sec:incumbent}
Three quantities decide a scheduler here, and no one of them decides it alone. Front quality says how good the trade-off curve is; fidelity says whether the outcomes arrive in the order the settings requested; reach says over how much of the curve that order holds. Table~\ref{tab:family} reads every method in the study on all three. Read together, PrefDT stands best: it produces the best trade-off curve of any learned method, and on the corpus it ships with it is the only learned method that clears both clauses of \eqref{eq:criterion}.

\begin{table}[t]
\centering
\caption{Methods under one shared protocol, grouped by the classes of Sec.~\ref{sec:classes}. HV ($\times10^6$, mean$\pm$std) and IGD are read against a reference point and reference front shared by every row; lower IGD is better. $|\rho|$ and reach are co-reported per Sec.~\ref{sec:protocol}, and the last column reads each row against the pre-registered criterion~\eqref{eq:criterion} and names the clause it fails.}
\label{tab:family}
\scriptsize
\setlength{\tabcolsep}{2.2pt}
\begin{threeparttable}
\resizebox{\columnwidth}{!}{%
\begin{tabular}{@{}lcccccc@{}}
\toprule
Method & HV & IGD & $|\rho|$ & Reach & $n$ & \eqref{eq:criterion}\\
\midrule\multicolumn{7}{@{}l}{\emph{Class 1: learned, preference-conditioned}}\\
PrefDT (shipped)$^{\dagger e}$ & \textbf{99.47}$\pm$1.18 & 402 & 0.93 & 601 & 100 & \textbf{yes}\\
PrefDT (PPO corpus)$^{\dagger}$ & 90.28$\pm$2.48 & 1457 & 0.91 & 540 & 100 & yes\\
PEDA, as published$^{\dagger g}$ & 82.87$\pm$2.17 & 1594 & 0.76 & 398 & 100 & no (both)\\
PEDA + attention pooling$^{\dagger g}$ & 84.18$\pm$1.98 & 1298 & 0.68 & 491 & 100 & no (both)\\
IQL+$\boldsymbol{\omega}$ (archive corpus)$^{c}$ & 96.23$\pm$1.82 & 1587 & 0.94 & 413 & 10 & no (reach)\\
IQL+$\boldsymbol{\omega}$ (PPO corpus)$^{c}$ & 91.83$\pm$1.48 & 1221 & 0.95 & 528 & 10 & yes\\
PGMORL & 87.34$\pm$1.23 & 4501 & 0.84 & 15 & 10 & no (both)\\
Per-preference PPO$^{\dagger}$ & 90.72$\pm$1.54 & 1369 & 0.95 & 396 & 100 & no (reach)\\
$\boldsymbol{\omega}$-BC$^{\dagger}$ & 89.23$\pm$2.88 & 1237 & 0.49 & 458 & 100 & no (both)\\
PrefDT, ideal-point targets & 90.21$\pm$2.21 & 1475 & 0.98 & 170 & 10 & no (reach)\\
\midrule
\multicolumn{7}{@{}l}{\emph{Class 2: model-based reference bounds}}\\
SCA oracle$^{a}$ & 99.91$\pm$1.25 & 314 & 0.88 & 607 & 10 & no ($|\rho|$)\\
SCA-MPC$^{b}$ & 99.94$\pm$1.27 & 597 & 0.93 & 606 & 10 & yes$^{f}$\\
\midrule
\multicolumn{7}{@{}l}{\emph{Class 3: distillation teacher (ours; corpus source)}}\\
Teacher$^{d}$ & 99.80$\pm$1.31 & 351 & --- & 616 & 10 & ---\\
\midrule
\multicolumn{7}{@{}l}{\emph{Floors}}\\
Unconditioned DT & 79.96$\pm$5.26 & 3162 & --- & 100 & 10 & ---\\
Random & 62.40 & 18529 & --- & 193 & 1 & ---\\
\bottomrule
\end{tabular}}
\TPTnoteswidth{\columnwidth}
\begin{tablenotes}[flushleft]\scriptsize
\item[] Reach is defined for any method that offers more than one front point; $|\rho|$ is a correlation between requested settings and the outcomes they produce and is undefined (---) for a method that takes no preference input.
\item[$^{\dagger}$] Inferential rows, $n{=}100$ paired seeds; the others are descriptive, $n{=}10$.
\item[$^{a}$] Non-causal: supplied the full future arrival sequence.
\item[$^{b}$] Not real-time schedulable as implemented (Sec.~\ref{sec:switchcost}; qualified in Sec.~\ref{sec:conclusion}).
\item[$^{c}$] Printed descriptive, but this row's verdict and the comparison of Sec.~\ref{sec:corpusadv} are read at $n{=}100$ (Appendix~C).
\item[$^{d}$] The search operator is published~\cite{deb2002nsga2}; the rule it searches is our own composition, and this row is the source of the training corpus.
\item[$^{e}$] Scored against an archived reference pool that does not contain it (Appendix~C).
\item[$^{f}$] Meets the numeric criterion, but its setting is a fresh solve per point rather than a run-time input (note~$^{b}$).
\item[$^{g}$] PEDA~\cite{zhu2023peda} ported to this problem on the PPO corpus with its own conditioning kept: a fixed-dimension input by zero padding and no per-user path; the second row adds the pooled encoder of Sec.~\ref{sec:encoder} and still omits the per-user path.
\end{tablenotes}
\end{threeparttable}

\end{table}

Three configurations reach higher front quality, and the three quantities do not all read for them. The teacher offers the widest span in the study, $616$\,s---the reference the reach clause is set against, since $0.8\times616$ is the required $493$---and no fidelity, since fidelity is a correlation between requested settings and the outcomes they produce and the teacher's archive is not requested. The non-causal oracle reads on both and fails the ordering clause, at $|\rho|{=}0.88$. The rolling-horizon solver reads on both and passes, at $|\rho|{=}0.93$ over $606$\,s. What the three share is the narrow band they occupy: they span $0.14$ HV, and the full-information oracle buys only $0.11$ over the ten-parameter rule. Front quality in this environment saturates against physical structure that ten parameters already capture, so it has little room left in which to separate methods, and the separation falls to the other two quantities.

The closest published method reads on the same corpus as the PPO-corpus row. PEDA~\cite{zhu2023peda} ported as published---a fixed-dimension input by zero padding, association decoded from the pooled token alone, its own conditioning kept---scores $82.87$ against PrefDT's $90.28$ on the same hundred seeds ($p{=}3.2{\times}10^{-30}$), at $|\rho|{=}0.76$ over $398$\,s, failing both clauses. Adding the pooled encoder without the per-user path recovers $1.31$; the remaining $6.10$ is the per-user bypass, which Table~\ref{tab:ablation} isolates. What separates PrefDT from the method it descends from is the interface to the physical network, not the sequence model.

PrefDT retains $99.7\%$ of its teacher's front quality, and the teacher that figure is measured against is not the one it learned from but the strongest we could find. The teacher's population is also the size of the front it returns, and re-run at the same ${\sim}2000$-evaluation budget the largest population is the strongest: $100.10$ at 101, against $99.79$ for the pop-50 archive the student was actually distilled from (Table~\ref{tab:incumbent}). Against that stronger teacher, PrefDT at 101 settings reaches $99.82$, while the teacher has no fidelity at all, its front size is its population, and a different point on it requires searching again. The $0.27$ HV between them is not spread over the front: at matched cardinality the $0$--$120$\,s band alone closes $127\%$ of the gap, more than all of it, because above $380$\,s PrefDT's points are the better ones. Appendix~D decomposes that corner into what the corpus withheld, what deterministic decoding costs, and a residual limit of imitation, and what survives the first two is about a second of corner depth on a $900$-s axis. Distillation therefore delivers a settable, densifiable front at $99.7\%$ of a teacher stronger than its own corpus source, and what separates them reduces to one second at one end.

\begin{table}[t]
\centering
\caption{PrefDT against the teacher's population scan. HV ($\times10^6$) is read over $n{=}100$ paired seeds against a shared reference; every NSGA-II row is a search of about 2000 policy evaluations.}
\label{tab:incumbent}
\scriptsize
\setlength{\tabcolsep}{1.6pt}
\begin{threeparttable}
\begin{tabular}{@{}>{\raggedright\arraybackslash}p{3.6cm}cccc@{}}
\toprule
Method & HV & Corner (s) & Settable & Densify\\
\midrule
NSGA-II pop 101 (best at fixed budget) & \textbf{100.096} & --- & no & re-search\\
NSGA-II pop 101$^{\ddagger}$ (2$\times$ budget) & 100.120 & --- & no & re-search\\
NSGA-II pop 21 & 99.887 & 102.8 & no & re-search\\
NSGA-II pop 50 (corpus source) & 99.786 & 102.0 & no & re-search\\
\textbf{PrefDT (shipped), $K{=}101$} & 99.824$^{\S}$ & 103.6 & \textbf{yes} & \textbf{rollout}\\
PrefDT (shipped), $K{=}21$ & 99.466 & 104.3 & yes & rollout\\
PrefDT (archive corpus, no corner/physics), $K{=}21$ & 99.229 & 105.0 & yes & rollout\\
\bottomrule
\end{tabular}
\begin{tablenotes}[flushleft]\scriptsize
\item[] \emph{Corner} is the deepest delay a fixed setting or archive member attains, the depth a request can reach; \emph{Settable} is whether a different point on the front can be asked for at run time; \emph{Densify} is what adding points to the front costs.
\item[$^{\ddagger}$] Given twice the search budget. The corpus-source row is the teacher row of Table~\ref{tab:family}, re-evaluated here at $n{=}100$.
\item[$^{\S}$] The headline deficit is measured against the strongest configuration, pop 101: $-0.27$ HV ($p{=}1.8{\times}10^{-6}$); against pop 21 the difference is $-0.06$ ($p{=}0.12$).
\end{tablenotes}
\end{threeparttable}

\end{table}

Against the solvers the quantity cannot be read at the sample the rest of this paper uses. Their solve cost fixes it: they exist only on the first ten seeds, since an $n{=}100$ sweep is $150$--$200$ hours of single-core compute per solver (Sec.~\ref{sec:switchcost}), and on those ten seeds the comparison must be paired---PrefDT scores $99.27$ there, not its hundred-seed $99.47$. Seed for seed SCA-MPC leads by $0.67$ and the SCA oracle by $0.64$, on 7 of 10 seeds, $p{=}0.16$, $95\%$ CI $[-0.07,+1.36]$. That is an underpowered comparison, not a measured tie and not a measured loss, and we report it as such.

Read on front quality alone, the ranking selects against the property the conditioning was added for. The highest front quality in the whole ablation family ($92.44$) belongs to a variant with the return token removed and the shortest context window, at fidelity $0.643$: it covers the front without following the settings. $\boldsymbol{\omega}$-conditioned IQL, trained on the same corpus as PrefDT, holds that clause and fails the other---$|\rho|{=}0.94$ over $413$\,s, $67\%$ of the teacher's span against the required $493$, and no advantage-weighting temperature closes it---swept over two and a half orders of magnitude, the reach deficit stays at least $97$\,s at every setting. PGMORL fails both, offering six policies that collapse to about three distinct behaviors. The sharpest case is PrefDT's own: removing the preference token raises front quality above every arm of the family while fidelity and reach fail together (Sec.~\ref{sec:attribution}). Selected by front quality alone, three of these could have been the final configuration; under the criterion, none of them is. The attribution of the teacher's own quality is Appendix~D.

\subsection{Operability: setting fidelity, budget tracking, and mid-episode revision}
\label{sec:operability}
\label{sec:budget}
Operability was defined in Sec.~\ref{sec:intro} as accepting three requests at run time: a preference over delay and energy, a total energy budget, and a mid-flight revision of that budget. The first is read against the criterion of Sec.~\ref{sec:protocol}. PrefDT posts $|\rho|{=}0.93$ over a reach of $601$\,s, clearing both clauses, and it is the only learned method in the study that clears them on the corpus it ships with (Sec.~\ref{sec:jointreading}). The second and third are read against the energy coordinate of the return token, which by \eqref{eq:identity} always equals the energy still allowed to be spent.

Under unchanged physics the budget is followed and never overrun. Realized consumption is monotone in the request across the whole range, with a gain $dE/d\mathcal{B}$ between $0.75$ and $0.99$ on all four trained variants of the pipeline (the archive and corner-emphasis corpora, each with and without the physics features), and a violation rate of $0.00$ at a budget tight enough to constrain the flight. Below the floor no policy in this study can fly under, the request saturates rather than being pursued. Under a propulsion cost raised by half in mid-flight, PrefDT finishes $+0.6\%$ over its contract with no re-issue and no outer loop, where the fixed rule of matching appetite finishes $+17\%$ and violates on every episode, and a supervisory controller built to repair exactly that blindness recovers less than one point of the overshoot (Fig.~\ref{fig:consumption}). A revision is exact where it should be exact and monotone where it should be monotone: re-issuing the same total is exact to floating point---at most $4{\times}10^{-4}$\,J against the $34$--$41$\,kJ consumed---tightening tightens and relaxing relaxes, and a revision below the physical floor is ignored rather than pursued. The one other method in this study that holds a budget is a Lagrangian CMDP retrained for each budget. Read at the three budgets it was trained for, it tracks more accurately than PrefDT---$0.107$ against $0.178$---and it exists nowhere else, each of the three bought with 2M environment steps.

\begin{figure}[t]
\centering
\includegraphics[width=\linewidth]{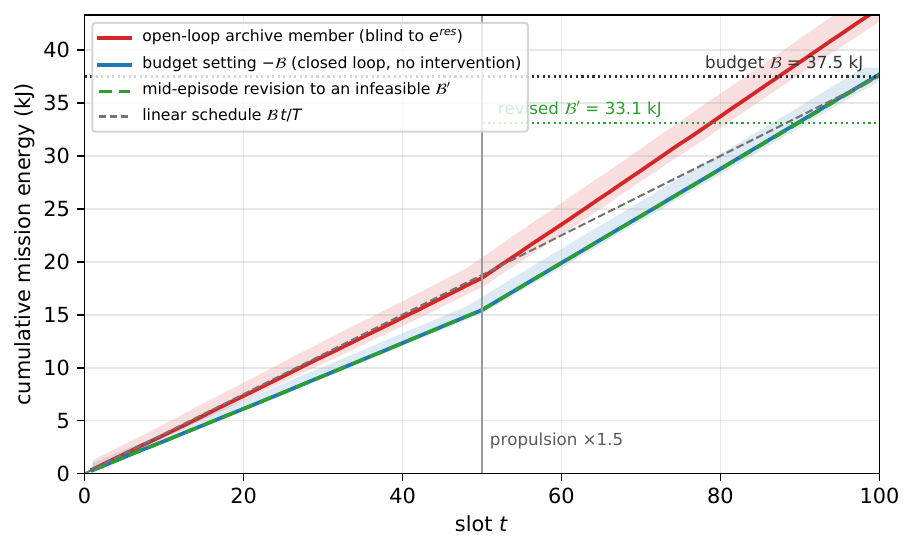}
\caption{Capabilities (B) and (C) under a mid-episode change of physics. Each trace is cumulative mission energy against the slot index, with the band one standard deviation across seeds; the horizontal lines are the requested budget and the revised total, the straight dashed line is the pace that would land exactly on the budget, and the vertical line marks the slot at which propulsion cost rises. The archive member is a fixed rule that reads no consumption state; PrefDT receives the budget as the energy coordinate of its return token and no other instruction; the third trace rewrites that coordinate mid-episode to a total below what the changed physics allows.}
\label{fig:consumption}
\end{figure}

The three readings establish that the contract is enforced against consumption rather than against a plan. The token is the remaining allowance itself, so a propulsion cost raised by half enters the quantity the policy reads one slot after the rise. The vector structure is what makes this possible: collapsed to a scalar, the same requests reverse, and asking for a larger budget yields less consumption (Sec.~\ref{sec:ladder}). The violation rate under the degraded contract is $0.70$ rather than zero, and the rate and the magnitude are consistent: the token requests a target, not a ceiling, so a policy that tracks one tightly finishes on both sides of it. A one-sided guarantee is obtained by requesting a margin instead: at $0.9\mathcal{B}$ the same frozen model runs at violation rate $0.238$, from $0.305$, at relative error $0.187$ from $0.172$.

\subsection{What a new trade-off costs: one rollout against a re-solve, a re-search, and a retraining}
\label{sec:switchcost}
A setting supplied at run time is worth what a new setting costs. Four routes to a new point on the trade-off curve are priced here against each other: one rollout of the frozen model, one convex re-solve, one fresh search of the teacher's rule, and one training run of a scalarized learner. All four are timed under the single-core protocol of Sec.~\ref{sec:configs}.

\begin{table}[t]
\centering
\caption{What one new setting costs, by route, timed under the single-core protocol of Sec.~\ref{sec:configs}.}
\label{tab:unifiedcost}
\scriptsize
\setlength{\tabcolsep}{3.5pt}
\begin{threeparttable}
\begin{tabular}{@{}lrrrc@{}}
\toprule
Route to one new setting & Interaction & Core-s & Wall-clock (s) & $n$\\
\midrule
PrefDT, one rollout & 100 slots & 0.41 & \textbf{0.41}$\pm$0.02 & 20\\
SCA oracle, one convex re-solve & 100 slots & 251.7 & \textbf{252} (155--378) & 3\\
SCA-MPC, one convex re-solve & 100 slots & 340.0 & \textbf{341}$\pm$7 & 3\\
Per-preference PPO, one training run & 2M steps & 986.7 & \textbf{988}$\pm$3 & 3\\
Teacher, one re-search$^{\ast}$ & 606{,}000 slots & 133.8 & \textbf{135} & 1\\
\bottomrule
\end{tabular}
\begin{tablenotes}[flushleft]\scriptsize
\item[] \emph{Interaction} is the simulated slots or environment steps the route consumes for one setting, and does not depend on the machine; \emph{core-s} is CPU time including child processes; \emph{wall-clock} is elapsed time on an otherwise idle machine; $n$ is the repetitions behind the row. The SCA oracle's spread is across environment seeds at one preference, so its range is printed rather than a standard deviation.
\item[$^{\ast}$] The teacher's search is paid once and returns its archive rather than one setting; the row times the corpus-source configuration.
\item[] Wall-clock is machine-dependent and not uniformly so---this machine runs the teacher's search faster than the evaluation server and the solvers' convex programs slower---while the interaction column does not move.
\end{tablenotes}
\end{threeparttable}
\end{table}

Two of the four routes pay in environment interaction and two pay in computation per slot (Table~\ref{tab:unifiedcost}). A rollout of the frozen model is one episode of 100 slots and $0.41$\,s. The solvers consume the same 100 slots and $252$ and $341$\,s, because every slot of that episode is a fresh convex program. The scalarized learner consumes 2M environment steps and $988$\,s before it can serve its first setting, and the second setting costs the same again, for a front $8.8$ HV below the swept one. The teacher's search consumes $606{,}000$ slots once and returns its archive---$60\times$ the interaction a 101-point sweep of the frozen model costs, a ratio between interaction counts and not between elapsed times. Per setting, on one core, the retraining route costs $2{,}400\times$ the rollout and the solvers $620$--$830\times$; a 21-point front follows by multiplication, at $8.6$\,s of rollouts against $5.8$ hours of retraining and two hours of re-solving for each evaluation seed. Densifying is asymmetric in the same way: $21\to101$ settings buys $+0.36$ HV for 80 further rollouts, while the teacher's front size is its population and changes only by searching again (Table~\ref{tab:incumbent}).

\begin{table}[t]
\centering
\caption{The scheduling node's per-slot cost, on the shipped frozen checkpoint and with no retraining.}
\label{tab:artifact}
\scriptsize
\setlength{\tabcolsep}{4pt}
\begin{threeparttable}
\begin{tabular}{@{}lr@{}}
\toprule
Quantity & Shipped checkpoint\\
\midrule
Parameters & 3.23\,M\\
Size (half precision) & 6.5\,MB\\
Decision latency, p95 & 3.7\,ms\\
\quad as a share of a 1\,s slot & 0.37\%\\
Hardware & one Apple M1 CPU core, batch 1\\
Control channel vs.\ offloaded traffic & $1.2\times10^{-3}$\\
\bottomrule
\end{tabular}
\begin{tablenotes}[flushleft]\scriptsize
\item[] The control-channel row compares the state the node receives and the actions it returns against the offloaded traffic the fleet already carries; the closed-form features of Sec.~\ref{sec:token} add nothing to it, being functions of raw state the node computes for itself.
\end{tablenotes}
\end{threeparttable}

\end{table}

The asymmetry does not depend on which route reaches the better front. It rests on where the search sits. The teacher must interact with the environment for all $606{,}000$ slots and the scalarized learner for 2M steps per setting, while PrefDT trains from stored trajectories---zero environment steps, $40{,}000$ gradient steps---and pays the search cost once, before any setting is requested (the axes this does not cover---the teacher's own quality, and the designer who knows which ten parameters to write---are recorded in Sec.~\ref{sec:conclusion}). What remains at run time is one forward pass per slot, priced in Table~\ref{tab:artifact}: nothing is solved, searched, or explored in the field. What offline training removes is the search at run time; what it cannot remove is the need for competent behavior to exist somewhere in the corpus.

\subsection{Attribution by ablation: corpus, encoder, return channel, pipeline}
\label{sec:attribution}
\label{sec:corpusadv}
\label{sec:iface}
\label{sec:ladder}
\label{sec:mechanism}
\label{sec:corpus}
Sec.~\ref{sec:prefdt} argued for each component of PrefDT before any experiment was run. Eleven of them are removed here, one at a time, with architecture, training schedule and seeds otherwise fixed, so that what leaves with each component can be attributed to it rather than to model capacity, to data volume, or to some other component. Table~\ref{tab:ablation} is the result, module by module, with the configuration each removal was run under.

\begin{table*}[t]
\centering
\caption{Every component of Sec.~\ref{sec:prefdt}, the ablation that removes it, what the ablation returns, and what that establishes. Arms, corpora and seed counts are in Appendix~C of the supplementary material.}
\label{tab:ablation}
\scriptsize
\setlength{\tabcolsep}{3.2pt}
\begin{tabular}{@{}p{2.4cm}p{4.2cm}p{4.9cm}p{4.2cm}@{}}
\toprule
Module & How it is removed & Result & What it establishes\\
\midrule
Attention pooling & zero padding of a fixed user count & $-2.90$ HV ($p{=}2.0{\times}10^{-7}$) & the invariant summary is the better representation, not merely the convenient one\\
\addlinespace[2pt]
Per-user bypass \eqref{eq:bypass1} & decision head reads the pooled summary only & $-6.10$ HV ($p{=}3.0{\times}10^{-29}$); $|\rho|$ $0.91\to0.68$ and reach $540\to491$\,s, both clauses of \eqref{eq:criterion} failing & the collapse \eqref{eq:perminv} predicts, produced on demand: a pooled encoder cannot supervise per-entity decisions without a per-entity path\\
\addlinespace[2pt]
Injection site \eqref{eq:inject} & $\boldsymbol{\omega}$ into the state token only, or into the return token only & $92.17/0.92/464$ and $90.56/0.80/552$ against the two-site model's $89.91/0.90/695$ (HV\,/\,$|\rho|$\,/\,reach) & two injection points buy reach, not front quality\\
\addlinespace[2pt]
Vector return-to-go \eqref{eq:inject} & scalar return, or no return at all & scalar $+0.08\%$ HV ($p{=}0.955$) at $|\rho|{=}0.69$, vector $+1.10\%$ ($p{=}1.1{\times}10^{-4}$) at $0.91$, both against no return at all; under the scalar, realized consumption reverses as the request rises, $54.1\to51.4\to45.2$\,kJ at requests of $70$, $80$ and $90$\,kJ & a scalar carries the setting and cannot be trained from\\
\addlinespace[2pt]
Preference token & removed, with the conditioner's targets supplied directly & HV $92.02$ against the un-ablated $90.28$; $|\rho|{=}0.805$ and reach $274$\,s, both below criterion & the token buys the settable front, not front quality\\
\addlinespace[2pt]
Context window & $K_{\mathrm{ctx}}\in\{1,10,50\}$ crossed with the return token & $+3.64$ HV with the token and $-3.44$ without it; interaction $+7.08$ ($p{=}1.9{\times}10^{-24}$) & context tracks the decrementing target \eqref{eq:decr}; it is not a model of the world\\
\addlinespace[2pt]
Teacher and corpus & distil from per-preference PPO experts instead of the teacher's archive & $-8.42$ HV; the field reorders, with a corpus--learner interaction of $+4.87$ HV & imitation preserves the teacher's operating point where policy improvement leaves it\\
\addlinespace[2pt]
Conditioner \eqref{eq:reg} & linear preference-to-return fit; ideal-point targets & the quadratic fit moves $|\rho|$ from $0.809$ to $0.932$; A4 posts $|\rho|{=}0.98$ over a reach of $170$\,s & the conditioner decides whether the target a setting names is one the model can reach\\
\addlinespace[2pt]
Corner emphasis & uniform preference sampling & $|\rho|{=}0.895$, under the criterion's $0.90$ & reaching the front's low-delay corner costs obedience until it is repaired\\
\addlinespace[2pt]
Physics features \eqref{eq:physfeat} & raw tokenizer features only & $|\rho|$ restored, $0.895\to0.931$; without corner emphasis the same features buy corner depth instead & what the features buy depends on the corpus, and the pair carries the shipped configuration back over the criterion\\
\addlinespace[2pt]
Setting-axis recalibration & uniform-$\boldsymbol{\omega}$ sweep & 21 of 21 settings distinct and non-dominated against 12 of 21 spoiled; $|\rho|{=}1.000$ against $0.931$; front quality unmoved & recalibration changes what a setting names, not what the model can do\\
\bottomrule
\end{tabular}
\end{table*}

\begin{figure}[t]
\centering
\includegraphics[width=\linewidth]{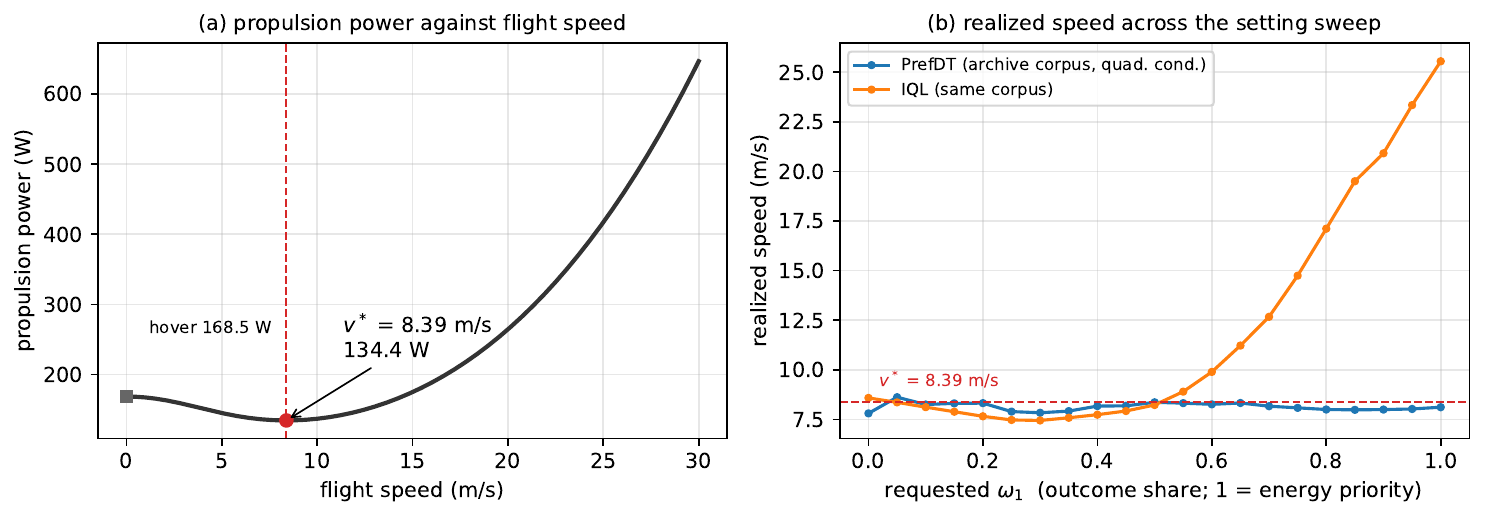}
\caption{(a) Propulsion power against flight speed, with hover and the interior minimum $v^*$ marked. (b) Realized mean flight speed against the requested setting, for two learners trained on the same archive corpus: PrefDT with the quadratic conditioner, and $\boldsymbol{\omega}$-conditioned IQL. The dashed line is $v^*$ carried over from (a).}
\label{fig:speed}
\end{figure}

Three readings run across the table. The per-user bypass is not a choice between alternatives: the loss its removal produces is the one \eqref{eq:perminv} predicts---front quality, fidelity and reach fall together---so what the measurement establishes is that a pooled encoder cannot supervise per-entity decisions without a per-entity path. The pipeline that follows the corpus is a short ladder: the archive-corpus model reads $98.65$, the quadratic conditioner carries it to $99.04$, corner emphasis buys the front's corner at a fidelity of $0.895$, and the physics features restore fidelity and land the shipped $99.47$. Among the choices that were genuinely open, the corpus moves front quality further than any architectural one---$8.42$ HV---because the teacher's search settled at the propulsion power minimum and imitation preserves that operating point where policy improvement disturbs it (Fig.~\ref{fig:speed}). And the removals fall into two kinds: those that cost front quality, and those that cost the settable front while leaving front quality alone or \emph{raising} it---removing the preference token raises hypervolume above the un-ablated configuration with both clauses of \eqref{eq:criterion} failing. Selecting by front quality alone, in this design space, selects a model that does not take settings.

\subsection{Deployment: what the link need not guarantee, and what scale does not widen}
\label{sec:uplink}
\label{sec:deployment}
Execution is centralized: one node---a ground station, or a designated leader aircraft---receives fleet state once per slot and returns the joint action. That link loses reports, and it takes time. This section measures what the scheduler needs it to guarantee, and reports three requirements it does not impose; what it does impose is Sec.~\ref{sec:boundaries}.

The first requirement is delivery. We drop a fraction of the user reports before they reach the policy---$5$, $10$, $20$ and $40\%$---and read what fraction of the front survives against the same model's own nominal condition; the registered predictions this study made, and the one that failed, are in Appendix~C. At one report in twenty, $96.7\%$ of the front survives; at two in five, with no retraining and no special handling, $71.1\%$ does, where the scripted per-user rule run on identical loss masks keeps $60.8\%$. The scheduler therefore keeps deciding on a link that drops reports, and the reason is the pooled summary of Sec.~\ref{sec:pooling}: it is defined on any subset of the active set, so a lost report delivers a smaller set rather than an incomplete input, and no substitute value is invented for the members that are missing. Nothing in training saw a dropped report.

The second requirement is recency of the individual report. Substituting the last report received for a missing one lifts retention at every loss rate---by $1.0$, $2.1$, $4.3$ and $8.9$ points at $5$, $10$, $20$ and $40\%$, so that at two fifths of all reports lost $80.0\%$ of the front still survives---while the same substitution applied to the per-user rule \emph{costs} it $1.3$, $2.3$, $3.7$ and $3.3$ points at the same four rates, each direction significant on the same 100 paired seeds. A stale value entering the pooled summary is one member among those that did arrive and is diluted accordingly, while a rule that compares this user's own numbers against a threshold acts on the stale value directly. The obvious mitigation is therefore available to this scheduler and unavailable to the alternative, and one buffered report per user recovers most of what heavy loss costs.

The third requirement is speed. We displace the snapshot the policy acts on by $k$ slots and read retention in both directions, so that staleness is separated from information: a snapshot from one slot in the \emph{future} damages exactly as much as one from a slot in the past (paired difference $0.007$ percentage points, $p{=}0.85$), and retention tracks the overlap between the snapshot's activity set and the current one, about $0.43$ at every displacement in both directions. What the scheduler requires is therefore that the snapshot be the current slot's, not that it arrive quickly: within a slot the active set cannot change, by the discrete-slot formulation of Sec.~\ref{sec:model}, so any transport completing inside the slot costs nothing and no latency guarantee is needed. What it costs when the snapshot is \emph{not} the current slot's is Sec.~\ref{sec:boundaries}.

A fourth condition is the size of the fleet. The encoder's parameter shapes bind $M$, so a different fleet is served by retraining at that size rather than by one model serving every size, and what that retraining preserves is the ratio to the teacher: $78$--$80\%$ at $M{=}2/4/8$. The distillation gap therefore does not widen with scale, and the per-slot decision stays under $2\%$ of the slot at every size.

\subsection{The boundaries: state synchronization, and the trained operating point}
\label{sec:boundaries}
Sec.~\ref{sec:uplink} established that what the link must deliver is a snapshot of the current slot, not a fast one. This experiment measures what happens when it does not: the policy decides slot $t$ from the state of slot $t{-}k$, for $k$ up to five. One slot of displacement costs about half the front, retention $49.5\%$, and further displacement costs no more---between $49.0$ and $50.0\%$ out to five slots. The loss is a cliff at the first slot rather than a slope, so there is no operating point at which a late snapshot is partially acceptable. State synchronization is a requirement of this scheduler, and it is the first boundary of the envelope.

The second boundary is the operating point itself. Every result above is measured at the configuration the model was trained on, while a deployed fleet meets channel constants, task loads and propulsion parameters that differ from it, so what this experiment asks is how far those results carry. Re-evaluated frozen at fifteen operating points away from the training configuration, PrefDT's worst-case front quality drops $4.3\%$ on average, against $2.6\%$ for the searched rule its corpus came from (Appendix~D): imitation inherits the rule's behavior at the trained point and not its tolerance away from it. Zero-shot transfer to shifted physics is therefore outside what this paper claims. An in-episode switch of the arrival process, by contrast, degrades neither at any context length, which locates the budget contract of Sec.~\ref{sec:budget} precisely: what it survives is a disturbance to the \emph{integral} of consumption, the quantity its token carries, not a change in the environment's statistics.

These two boundaries bound the envelope. Inside it---at the trained operating point, with the snapshot of the current slot---reports may fail and transport may take milliseconds, and every result of Sec.~\ref{sec:operability} applies as stated.

\section{Conclusion}
\label{sec:conclusion}
Unlike existing UAV-MEC schedulers, which compile the energy--delay trade-off into the objective before the fleet flies, PrefDT takes it as an input the scheduler reads at inference. One frozen model then serves any point on the curve at one rollout, $0.41$\,s per setting, where the alternatives pay a fresh solve, a fresh search, or a fresh training run.

The same channel carries a physical contract. Its energy entry is the joules still allowed, so a total budget is honored against what the fleet actually spends, is revisable in flight by a single write, and saturates safely when more is asked for than can be delivered. A denser front costs further rollouts and no further search. Three components make this work in a physical network: a pooled-token encoder with a per-user decision bypass, a vector-valued conditioning channel carried by a rollout decrement, and a seven-step distillation pipeline that generates the preference-labeled corpus that scheduling does not supply.

Across 26 method variants under one protocol, PrefDT produces the best trade-off curve of any learned method and, on the corpus it ships with, is the only one that clears a criterion on setting--outcome order and span fixed before the decisive runs. One model, unchanged, also holds an energy budget to $0.6\%$ when the physics change in mid-flight, where the open-loop rule it learned from overruns on every episode; and its front reaches $99.7\%$ of a teacher stronger than the one it was distilled from. Front quality does not say whether a setting was followed. We report two readings that do---the order in which the outcomes arrive, and the span over which that order holds---and the configuration this paper ships was selected on them, not on front quality.

Four qualifications bound what this paper establishes. Every result is measured in one physics-based simulator family, whose i.i.d.\ arrival process the training corpora share. Uplink bandwidth is uncontended by construction (Sec.~\ref{sec:model}), so one source of pressure on the association decision is absent. The solver latency of Sec.~\ref{sec:switchcost} times a per-slot re-compilation that parameterized caching would largely amortize---only $79$\,ms per slot is irreducible---so the schedulability comparison is a statement about our implementation, not about the methods; what no caching removes is the structure the cost table prices, a fresh convex program every slot and a fresh solve per preference. Finally, the cost accounting does not cover design knowledge at all: the teacher's quality is ours to have composed, and a designer who already knows which ten parameters to write has paid a cost this paper does not price, incurred once when the system is built and never again at a setting.

\begin{IEEEbiography}[{\includegraphics[width=1in,height=1.25in,clip,keepaspectratio]{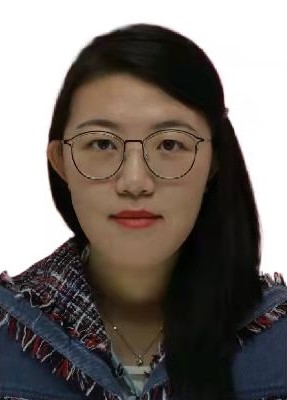}}]{Qiao Liao}
is currently pursuing a Ph.D. degree in Computer Science and Technology with the College of Intelligence and Computing, Tianjin University.
She is working on mobile edge computing. Her main research interests are reinforcement learning and service computing. 
\end{IEEEbiography}
\vspace{-10mm}
\begin{IEEEbiography}[{\includegraphics[width=1in,height=1.25in,clip,keepaspectratio]{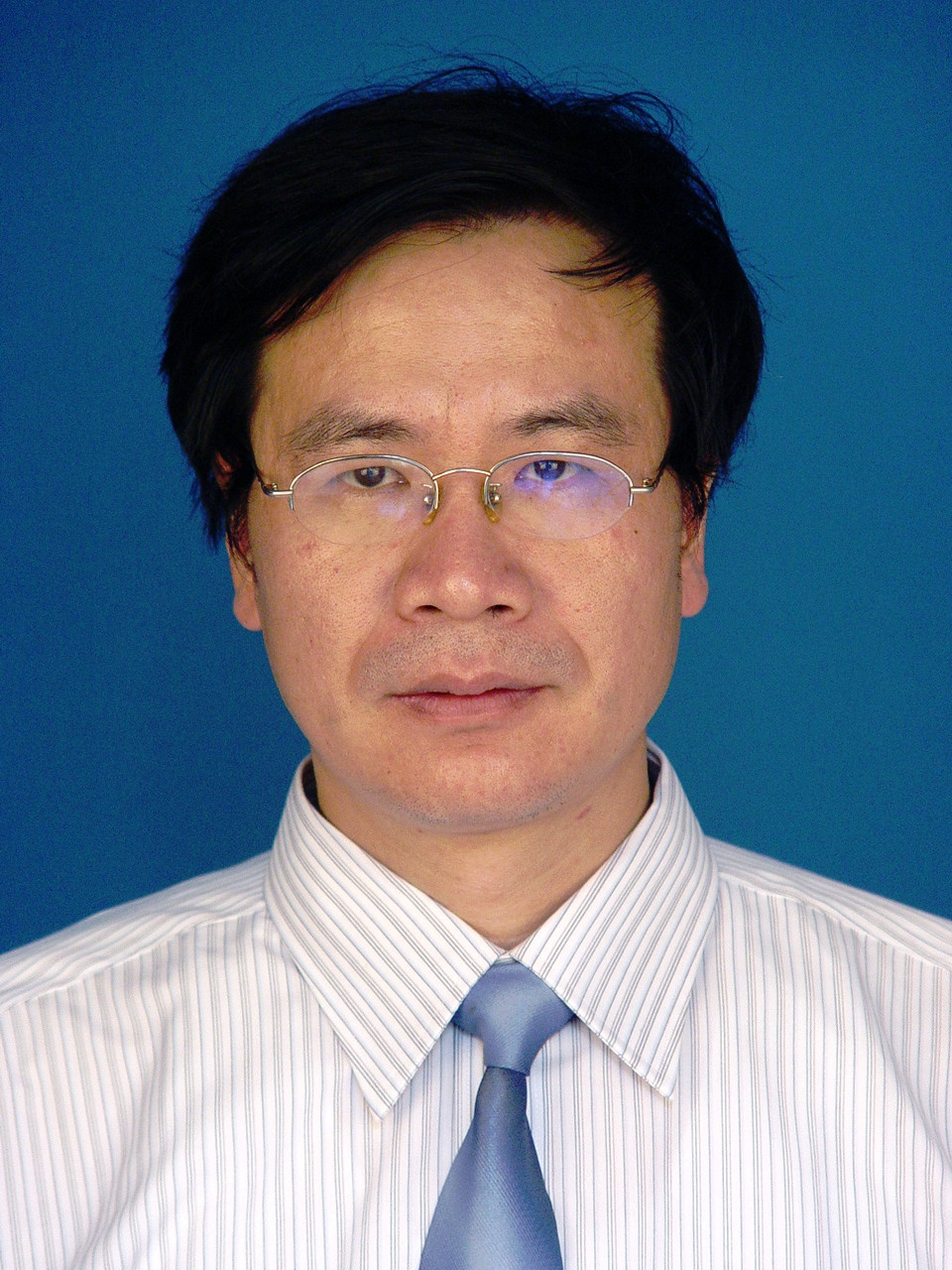}}]{Zhiyong Feng}
(Member, IEEE)  was born in 1965.
He received the Ph.D. degree from Tianjin University, Tianjin, China, in 1996.

He is currently a Professor with the College of Intelligence and Computing, Tianjin University, Tianjin, China.
He has authored more than 200 articles, one book and 39 patents.
His research interests include service computing, knowledge engineering, and software engineering.

Dr. Feng is a distinguished member of China Computer Federation (CCF),
a member of the Association for Computing Machinery (ACM),
and the Chairman of ACM China Tianjin Branch.

\end{IEEEbiography}

\vspace{-12mm}

\begin{IEEEbiography}[{\includegraphics[width=1in,height=1.25in,clip,keepaspectratio]{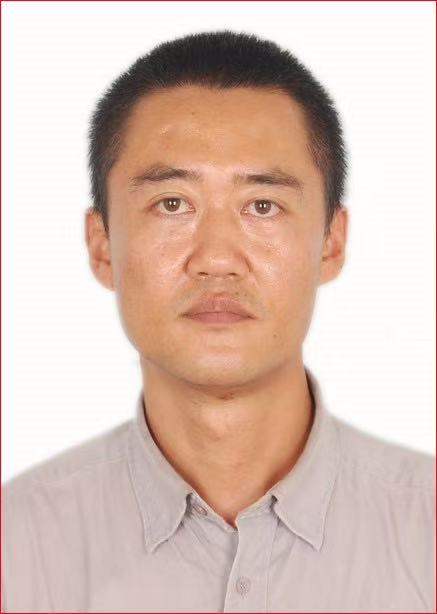}}]{Bin Wu}
received the Ph.D. degree in Electrical and Electronic Engineering from the University of Hong Kong, Hong Kong, in 2007. From 2007 to 2012, he was a Postdoctoral Research Fellow with the Department of Electrical and Computer Engineering, University of Waterloo, Waterloo, ON, Canada. He is currently a Professor with the College of Intelligence and Computing, Tianjin University, Tianjin, China. His research interests include computer systems and networking, and communication system design.
\end{IEEEbiography}

\vspace{-10mm}

\begin{IEEEbiography}[{\includegraphics[width=1in,height=1.25in,clip,keepaspectratio]{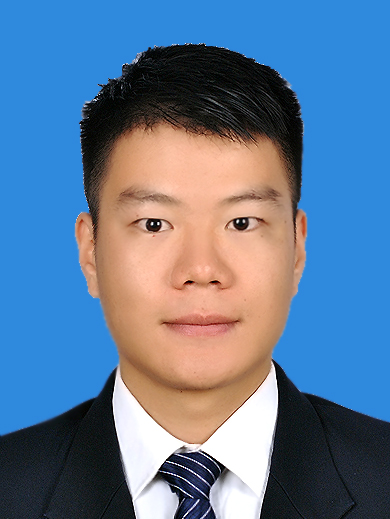}}]{Guodong Fan}
received the Ph.D. degree in Computer Science and Technology with the College of Intelligence and Computing, Tianjin University, Tianjin, China. He is working on cognitive services. His main research interests include representation learning, service computing, and software repository mining.
\end{IEEEbiography}

\end{document}